\newtheorem{theorem}{Theorem}[section]
\newtheorem{proposition}[theorem]{Proposition}
\newtheorem{definition}[theorem]{Definition}
\newtheorem{exmp}[theorem]{Example}
\newenvironment{proof}{\noindent {\it Proof:\ }}{\hspace*{\fill} \qed\\[0.3cm]}
\def\qed{\hbox{${\vcenter{\vbox{
        \hrule height 0.4pt\hbox{\vrule width 0.4pt height 6pt
        \kern5pt\vrule width 0.4pt}\hrule height 0.4pt}}}$}} 
\newenvironment{example}{\begin{exmp} \rm  }{\hspace*{\fill} \qed \end{exmp}}
\newcommand{\uC}[2]{\ensuremath{\begin{pmatrix} #1 \\ #2 \end{pmatrix}}}
\newcommand{\bC}[4]{\ensuremath{\begin{pmatrix} #1 & #2 \\ #3 & #4 \end{pmatrix}}}
\def\maketag@@@#1{\hbox{\m@th\normalfont\normalsize#1}}
\renewcommand{\bar}{\overline}
\newcommand{\encourages}{\Leftarrow}
\newcommand{\gain}{\mathop{\mathrm{gain}}}
\newcommand{\Span}{\mathop{\mathrm{span}}}
\newcommand{\sgn}{\mathop{\mathrm{sgn}}}
\begin{document}

\title{Representing Fitness Landscapes by Valued Constraints\\to Understand the Complexity of Local Search}

\author{%
\name Artem Kaznatcheev \email kaznatcheev.artem@gmail.com\\
\addr Department of Computer Science\\University of Oxford, UK\\ 
\AND
\name David~A. Cohen \email d.cohen@rhul.ac.uk \\
\addr Department of Computer Science\\
Royal Holloway, University of London, UK\\
\AND
\name Peter~G. Jeavons \email peter.jeavons@cs.ox.ac.uk \\
\addr Department of Computer Science\\University of Oxford, UK
}

\maketitle

\begin{abstract}
Local search is widely used to solve combinatorial optimisation problems and to model biological evolution, 
but the performance of local search algorithms on different kinds of fitness landscapes is poorly understood.  
Here we consider how fitness landscapes can be represented using valued constraints, and investigate what the structure of 
such representations reveals about the complexity of local search.

First, we show that for fitness landscapes representable by binary Boolean valued constraints there is a minimal necessary constraint graph that can be easily computed. 
Second, we consider landscapes as equivalent if they allow the same (improving) local search moves; we show that a minimal constraint graph still exists, but is NP-hard to compute.

We then develop several techniques to bound the 
length of any sequence of local search moves.
We show that such a bound can be obtained from the numerical values of the constraints
in the representation,
and show how this bound may be tightened by considering equivalent representations.
In the binary Boolean case, we prove that a degree 2 or tree-structured constraint graph gives a 
quadratic bound on the number of improving moves made by any local search;
hence, any landscape that can be represented by such a model will be tractable for any form of local search.

Finally, we build two families of examples to show that the conditions in our tractability results are essential.
With domain size three, even just a path of binary constraints can model a landscape with an exponentially long sequence of improving moves.
With a treewidth-two constraint graph, even with a maximum degree of three, binary Boolean constraints can model a landscape with an exponentially long sequence of improving moves.
\end{abstract}

\section{Introduction}

Local search techniques are widely used to solve combinatorial optimisation problems,
and have been intensively studied since the 
1980's~\cite{Aaronson2006,Chapdelaine2005,PLS,Llewellyn1989,Malan2013,Ochoa2018,SY91,Tayarani-Najaran2014}.
They have also played a central role in the theory of biological
evolution, ever since Sewall Wright~\citeyear{Wright1932} introduced the idea of viewing the evolution of 
populations of organisms as a local search process over a space of possible genotypes 
with associated fitness values that became known as a ``fitness landscape''.

The term {\em fitness landscape} is now used to designate any structure $(A,f,N)$ 
consisting of a set of points $A$,
a \textit{fitness function} $f$ defined on those points, and a \textit{neighbourhood function} $N$ on those points,
that indicates which pairs of points are sufficiently close to be considered neighbours.
A point $x$ is said to be \emph{locally optimal} if all neighbours are non-improving 
(i.e., $\forall y \in N(x) \; f(x) \geq f(y)$) 
and \emph{globally optimal} if all points are non-improving.
The \emph{local search problem} for a fitness landscape is to find such a local optimum.
We say the problem is solved by a \emph{local search algorithm} if the only moves allowed in the procedure are from a point $x$ to a point $x'$ with $x' \in N(x)$ and $f(x') > f(x)$.  

Many approaches have been developed to try to distinguish fitness landscapes 
where a local or global optimal point can be found efficiently by local search from those where such optimal points cannot be found efficiently. 
In the 1980's and 90's these attempts focused on statistical measures such as correlation between function values at various distances and various notions of {\em ruggedness}~\cite{Malan2013}.
But, by the late 90's there were several studies highlighting the existence of fitness landscapes that were not rugged and yet were hard to optimise. 
Several new approaches have been developed recently, but the performance of local search algorithms on many kinds of fitness landscapes is still poorly understood~\cite{Malan2013,Ochoa2018,Tayarani-Najaran2014}.

In almost all common examples of local search problems, 
the points in the fitness landscape are tuples of values from some domain, and neighbourhoods are defined by some appropriate 
notion of distance between tuples.  
The fitness function in such a landscape can be defined 
by a collection of \emph{valued constraints}, and hence the search problem 
can be translated into a standard \emph{valued constraint satisfaction problem} (VCSP)~\cite{Cohen2013algebraic,Carbonnel2018,Farnqvist12,Kolmogorov2013,strimbu2019,Thapper2015,Thapper2016}.  

In this paper we begin the development of a novel approach to understanding local search on 
fitness landscapes, based on representing the fitness function using valued constraints
and studying the properties of these representations.
Using the VCSP framework allows us to classify fitness landscapes in new ways, and hence to distinguish new classes of fitness landscapes with specific properties.

Finding a locally optimal point on an arbitrary fitness landscape is a complete problem for the class of problems known as polynomial local search (PLS)~\cite{PLS,SY91,Chapdelaine2005}.
This means that it is expected to be computationally intractable in general to find such a local optimum. 
In particular, there exist known constructions to produce families of fitness landscapes 
where every sequence of improving moves to a local optimum 
from some starting points is exponentially long~\cite{SY91}.
On such landscapes, from such points, 
any local-search algorithm will require an exponentially long sequence of 
improving moves to reach a local optimum.

\ShortHeadings{Representing Fitness Landscapes by Valued Constraints}{Kaznatcheev, Cohen, \& Jeavons}

A key goal, therefore, is to identify classes of fitness landscapes where finding a local optimum
is \emph{tractable} (i.e., solvable in polynomial-time).
We do this by identifying classes of fitness landscapes where every sequence of improving moves from every point is at most polynomially long. These classes are based on properties
of the VCSP representation, such as the numerical values of the constraints or the
structure of the constraint graph.

We start by showing in Section~\ref{sec:mageq} that for some classes of fitness landscapes it is possible to efficiently compute a unique minimal representation as a VCSP instance (Theorems~\ref{thm:simplerep} and \ref{thm:valmin}), giving a convenient normal form.
Then in Section~\ref{sec:signeq} we equate all fitness landscapes that have the same improving local search moves (Definition~\ref{def:signequiv}); 
we show that in some important cases a unique minimal representation for each equivalence class still exists (Theorems~\ref{thm:trimprep} and~\ref{thm:signmin}), but can be NP-hard to compute (Theorem~\ref{thm:weakiscoNPcomplete}).

Using these tools, we then develop several techniques to bound the length of any sequence of local search moves, based on properties of the VCSP representation. 
In Section~\ref{sec:span} we show that such a bound can be obtained from the numerical values of 
the constraints in the representation (Proposition~\ref{prop:spanPath}), and show how this bound may be tightened by considering equivalent representations (Examples~\ref{ex:smoothSpanMinimised} and \ref{ex:cycle}).
In Section~\ref{sec:tree-structured}, we prove that fitness landscapes that can be represented by binary Boolean VCSPs with \emph{tree-structured constraint graphs} can have only quadratically long sequences of improving moves (Theorem~\ref{thm:main}) -- hence they are tractable for any local search algorithm. 
Finally, in Section~\ref{sec:robustness}, we give examples of fitness landscapes that have very simple representations, but have exponentially long sequences of improving moves (Examples~\ref{ex:counting} and \ref{ex:countBool}). 

Because our results are based on bounding the length of all possible sequences of improving moves, they apply to all possible local search algorithms, and hence are particularly useful for investigating properties of biological evolution, as we discuss in Section~\ref{sec:bio}.

\section{Background, Notation, and General Definitions}
We will model the points, $A$, in our fitness landscapes 
as assignments to a collection of $n$ variables, indexed by the set $[n] = 1,2,\ldots,n$, 
with domains $D_1,\ldots,D_n$. 
Hence each point corresponds to a vector $x \in D_1 \times \cdots \times D_n$.
We will generally focus on uniform domains (i.e., cases where $D = D_1 = \cdots = D_n$), 
where this simplifies to $x \in D^n$.
In particular, we will often be interested in Boolean domains, where $x \in \{0,1\}^n$, so each
point can be seen as a bit-vector.

The restriction of a variable assignment $x$ to some subset of variables, with indices in 
a set $S \subseteq [n]$, will be denoted $x[S]$, so $x[S] \in \prod_{j \in S} D_j$.
To reference the assignment to the variable at position $i$, we will usually write $x_i$ unless it is ambiguous, in which case we'll use the more general notation $x[i]$.
If we want to modify $x$ by changing a single variable, say the variable at position $i$,
to some element $b \in D_i$, then we'll write $x[i \mapsto b]$.

Given a set of points, $A$, a {\bf fitness function} on $A$ is 
defined to be an integer-valued function defined on $A$, that is, a function  $f: A \rightarrow \mathbb{Z}$. Because we are modelling fitness, rather than cost,  we \textit{maximise} our objective functions in this paper. All results can be carried over directly to the minimisation context.

To complete the definition of a fitness landscape, we will define a {\bf neighbourhood function} on the set of points $A$ to be a function $N: A \rightarrow 2^{A}$.
For simplicity, we will assume this function is symmetric in the sense that if $y \in N(x)$, then $x \in N(y)$, and we will call such a pair $x$ and $y$ {\bf adjacent} points.
Throughout the paper, we will focus on the case where the set of points $A$ is the set of assignments $D_1 \times \cdots \times D_n$ and $N$ is the {\bf 1-flip neighbourhood} defined by $y \in N(x)$ if and only if there is a variable position $i$ such that $x_i \neq y_i$ and this is the only difference (i.e., $\forall j \neq i \quad x_j = y_j$).
Hence, in the case of the Boolean domain, the graph of the function $N$, where the edges are the pairs of adjacent points, will be the $n$-dimensional hypercube.
Note that considering larger neighbourhoods will only increase (or keep the same) the length of the longest ascending path through the fitness landscape.
\begin{definition}[\shortciteNP{dVPK09,CGB13}]
\label{def:fitnessgraph}
Given any fitness landscape $(A,f,N)$, the corresponding \emph{\bf fitness graph} $G$ 
has vertex set $V(G) = A$ and directed edge set $E(G) = \{(x,y) \; | \; y \in N(x) \text{ and } f(y) > f(x)\}$. \end{definition}
The edges of the fitness graph consist of all pairs of adjacent points which have
distinct values of the fitness function, and are oriented from the lower value of the fitness
function to the higher value;
such directed edges represent the possible moves that can be made by a local search algorithm.

A {\bf (valued) constraint} with scope $S \subseteq [n]$ 
is a function $C_S: \prod_{j \in S} D_j \rightarrow \mathbb{Z}$. 
The {\bf arity of a constraint} $C_S$ is the size $|S|$ of its scope.
For unary and binary constraints we will generally omit the set notation and just write $C_i$ for $C_{\{i\}}$ 
or $C_{ij}$ for $C_{\{i,j\}}$, where $i < j$. 
We will represent the values taken by a unary constraint $C_i$ for each domain element 
by an integer vector of length $|D_i|$, and represent the values taken by 
a binary constraint $C_{ij}$ for each pair of domain elements by an integer matrix,
where $x_i$ selects the row and $x_j$ selects the column.
A zero-valued constraint (of any arity) will be denoted by 0.

\begin{definition}
\label{def:VCSP}
An instance of the \emph{valued constraint satisfaction problem (VCSP)} is a set of constraints 
$\mathcal{C} = \{C_{S_1},\ldots,C_{S_m}\}$.
We say that a VCSP instance $\mathcal{C}$ \emph{\bf implements} a fitness function
$f$ if $f(x) = \sum_{k = 1}^m C_{S_k}(x[S_k])$.
\end{definition}
The arity of a VCSP instance is the maximum arity over its constraints;
if this maximum arity is $2$, then we will call it a \emph{binary} VCSP instance.
The instance-size of a VCSP instance is the number of bits needed to specify 
$n$, $m$ and each constraint.

Given any VCSP instance $\mathcal{C}$, we can take $A$ as the set of all possible assignments, 
$f$ as the fitness function implemented by $\mathcal{C}$, and $N$ as the 1-flip neighbourhood, to obtain
an associated fitness landscape, $(A,f,N)$,
and hence an associated fitness graph, $G_{\mathcal{C}}$,
by Definition~\ref{def:fitnessgraph}.
The vertex set of $G_{\mathcal{C}}$ is the set of possible assignments, $A$, 
and hence is exponential in the size of the instance, $\mathcal{C}$, in general.
Each \emph{binary} VCSP instance also has an associated {\em constraint graph}, defined as follows, 
whose vertex set is polynomial in the size of the instance:
\begin{definition}
Given any binary VCSP instance $\mathcal{C}$, the corresponding \emph{\bf constraint graph} has 
vertices $V(\mathcal{C}) = [n]$, 
edges $E(\mathcal{C}) = \{\{i,j\} \; |  \; C_{ij} \in \mathcal{C}, C_{ij} \neq 0\}$, and constraint-neighbourhood function 
$N_\mathcal{C}(i) = \{j \; | \; \{i,j\} \in E(\mathcal{C}) \}$.
\label{def:conG}
\end{definition}

Each fitness landscape has a unique associated fitness graph which specifies 
all possible improving moves that can be made by a local search on that landscape.  
On the other hand, for each fitness landscape there may be a number of different 
VCSP instances that implement the fitness function of that landscape, and they may 
have different constraint graphs. This motivates the search for canonical, minimal or normalised representations of a given landscape, which we explore in the next two sections.

\section{Magnitude-Equivalence}
\label{sec:mageq}

It is clear from Definition~\ref{def:VCSP} that 
different VCSP instances can implement the same fitness function.
\begin{example}
Consider the two small VCSP instances shown in Figure~\ref{fig:magequiv}.
\begin{figure}[htb]
\begin{center}
\includegraphics[width = 0.5\textwidth]{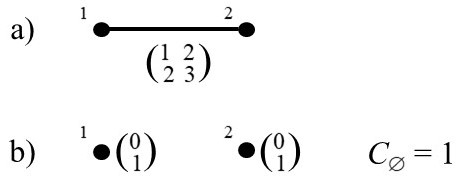}
\end{center}
\vspace{-0.5cm}
\caption{Two VCSP instances implementing the same fitness function} 
\label{fig:magequiv}
\end{figure}

Although these two instances have different constraint graphs,
they both implement the fitness function 
$[f(00),f(01),f(10),f(11)] = [1, 2, 2, 3]$.
\end{example}
We capture this equivalence with the following definition:
\begin{definition}
If two VCSP instances $\mathcal{C}_1$ and $\mathcal{C}_2$ implement the same fitness function $f$,
then we will say they are \emph{\bf magnitude-equivalent}.
\end{definition}
We will show in this section that for binary Boolean VCSP instances 
each equivalence class of magnitude-equivalent VCSP instances has a 
\emph{normal form}: a 
unique, minimal, and easy to compute representative member with special properties.
\begin{definition}
\label{def:simple}
A binary Boolean VCSP instance $\mathcal{C}$ is \emph{\bf simple} if every unary constraint has the form $C_{i} = \uC{0}{c_{i}}$ and every binary constraint has the form $C_{ij} = \bC{0}{0}{0}{c_{ij}}$.
\end{definition}
Each value $c_i$ or $c_{ij}$ will be referred to as the \emph{\bf weight}
of the corresponding constraint.

We now give a direct proof of the following simplification result which is analogous to similar results using constraint propagation in the standard VCSP~\cite{Cooper07:SAC}
(and is essentially a translation of standard results for pseudo-Boolean functions,~\citeNP{CramaHammer2011}).
\begin{theorem}
Any binary Boolean VCSP instance $\mathcal{C}'$ can be transformed into a unique simple VCSP instance
$\mathcal{C}$ that is magnitude-equivalent to $\mathcal{C}'$. 
Moreover, $\mathcal{C}$ can be constructed from $\mathcal{C}'$ in linear time. %
\label{thm:simplerep}
\end{theorem}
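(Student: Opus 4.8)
The plan is to argue both existence (constructive) and uniqueness of the simple normal form. For existence, I would proceed constraint by constraint, showing that every unary and every binary Boolean constraint can be rewritten, at the cost of adjusting lower-arity constraints (and a global constant), into the special form of Definition~\ref{def:simple}. Concretely, a unary constraint $C_i = \uC{a}{b}$ can be written as $\uC{0}{b-a}$ plus the nullary constant $a$; so the entry at $0$ is absorbed into a running constant. A binary constraint $C_{ij} = \bC{a}{b}{c}{d}$ can be written as $\bC{0}{0}{0}{d - b - c + a}$ plus a contribution $\uC{0}{c-a}$ to the unary constraint $C_i$ (handling the row index), a contribution $\uC{0}{b-a}$ to $C_j$ (handling the column index), and the constant $a$. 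One checks this identity by evaluating both sides at each of the four points $00,01,10,11$. After doing this for every binary constraint, all the binary constraints are in simple form; then one does the unary pass, folding each original and each newly-generated unary term at a given variable $i$ into a single simple unary constraint $\uC{0}{c_i}$ plus constants; finally one discards the accumulated nullary constant, since a fitness function is only affected up to... wait --- actually the constant does matter for the value of $f$, so I need to be careful here.

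Let me reconsider: the theorem says ``magnitude-equivalent,'' i.e., implements the \emph{same} fitness function, so constants cannot be thrown away. The fix is that ``simple'' as defined only constrains the unary and binary constraints; the instance may still contain a nullary constraint (a constant), or equivalently we allow $C_\emptyset$. Assuming the paper's convention permits a constant term (the definition of VCSP instance allows constraints of any arity including the empty scope), the construction above produces a simple instance exactly magnitude-equivalent to $\mathcal{C}'$. If the convention does not allow nullary constraints, one instead absorbs the constant into one chosen unary constraint $C_1 = \uC{\kappa}{\kappa + c_1}$ --- but that violates simplicity. So I would adopt the nullary-constant reading (or note the paper already implicitly does); this is the one genuinely fiddly modelling point, and I would flag it as the place to get the bookkeeping exactly right.

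For uniqueness, I would show that in a simple instance the weights are determined by $f$ via an inclusion--exclusion (Möbius) formula over the Boolean lattice. Writing $\mathbf{0}$ for the all-zero assignment, $e_i$ for the assignment that is $1$ only in coordinate $i$, and $e_{ij}$ for the assignment that is $1$ in coordinates $i,j$, a direct evaluation of the simple form gives the nullary constant $\kappa = f(\mathbf 0)$, then $c_i = f(e_i) - f(\mathbf 0)$, and then $c_{ij} = f(e_{ij}) - f(e_i) - f(e_j) + f(\mathbf 0)$. Since these expressions involve only $f$, any two simple instances implementing the same $f$ have identical weights (and identical constant), hence are the same instance; this is exactly the uniqueness claim. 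It also shows the map $f \mapsto \mathcal{C}$ is well-defined independently of the order in which constraints were processed.

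For the linear-time bound, I would observe that the construction touches each input constraint exactly once, performing $O(1)$ integer additions/subtractions per constraint (a binary constraint generates one $O(1)$ update to each of two unary accumulators and one to the constant accumulator; a unary constraint generates one $O(1)$ update to its accumulator and the constant), so the total work is $O(m + n)$ arithmetic operations on integers whose bit-length is linear in the instance size --- giving linear time in the instance-size as defined in the excerpt. The main obstacle I anticipate is not mathematical depth but precision of the accounting: making sure the unary contributions arising from binary constraints are combined with the original unary constraints before the final unary normalisation, and correctly tracking the global constant so that the resulting instance implements $f$ on the nose rather than merely up to an additive shift. Once the nullary-constant convention is pinned down, the verification reduces to the four-point (resp.\ two-point) check indicated above and the Möbius inversion, both of which are routine.
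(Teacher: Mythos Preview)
Your proposal is correct and takes essentially the same approach as the paper: both arguments amount to rewriting every constraint in its multilinear (pseudo-Boolean) polynomial form and collecting coefficients, with the paper writing this globally as $f(x) = C_\emptyset + \sum_i c_i x_i + \sum_{i<j} c_{ij} x_i x_j$ while you do the equivalent bookkeeping constraint-by-constraint. Your worry about the nullary constant is well-founded and correctly resolved --- the paper does retain an explicit $C_\emptyset$ term in the simple form --- and your separate uniqueness argument via evaluation at $\mathbf{0}$, $e_i$, $e_{ij}$ is exactly the inclusion--exclusion the paper uses (though it defers this computation to the proof of the \emph{next} theorem, relying implicitly on uniqueness of multilinear representations here).
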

\begin{proof}
First, two key observations:
\begin{enumerate}
\item Any unary Boolean constraint $C'_i : \{0,1\} \rightarrow \mathbb{Z}$ can be rewritten as a linear function:
\begin{equation*}
g'_i(x) = (1 - x_i)C'_i(0) + x_iC'_i(1)
\end{equation*}
\item Any binary Boolean constraint $C'_{ij} : \{0,1\} \times \{0,1\} \rightarrow \mathbb{Z}$ can be rewritten as a multilinear polynomial of degree $2$: 
\begin{align*}
g'_{ij}(x) = (1 - x_i)(1 - x_j)C'_{ij}(0,0) & + (1 - x_i)x_j C'_{ij}(0,1) \nonumber \\ 
& + x_i(1 - x_j)C'_{ij}(1,0) + x_ix_jC'_{ij}(1,1).
\end{align*}
\end{enumerate}
From this, we can simplify $\mathcal{C'}$ just by simplifying polynomials:
\begin{align}
    f(x) & = C'_\emptyset + \sum_{i = 1}^n C'_i(x_i) + \smashoperator[lr]{\sum_{\{i,j\} \in E(\mathcal{C'})}} C'_{ij}(x_i,x_j) \nonumber \\
    & = C'_\emptyset + \sum_{i = 1}^n g'_i(x) + \smashoperator[lr]{\sum_{\{i,j\} \in E(\mathcal{C'})}} g'_{ij}(x) \label{eq:badpoly}\\
    & = C_\emptyset + \sum_{i = 1}^n x_ic_i + \smashoperator[lr]{\sum_{1 \leq i < j \leq n}} x_ix_jc_{ij} \label{eq:goodpoly}
\end{align}
where we note that Equation~\ref{eq:badpoly} is a sum of a constant, some linear functions, and some multilinear polynomials of degree 2, and is thus itself a multilinear polynomial of degree 2 (or less).
Equation~\ref{eq:goodpoly} follows from Equation~\ref{eq:badpoly} by multiplying out into monomials and then grouping the coefficients of each similar monomial.
In particular, this gives us the following coefficients:
\begin{align}
    c_i & = C'_{i}(1) - C'_{i}(0) \quad + \sum_{j \; | \; \{i,j\} \in E(\mathcal{C}')}C'_{ij}(1,0) - C'_{ij}(0,0) \label{eq:simple_ci}\\
    c_{ij} & = C'_{ij}(0,0) - C'_{ij}(0,1) - C'_{ij}(1,0) + C'_{ij}(1,1) \label{eq:simple_cij}
\end{align}
The above calculation can be done in linear time.
As the last step in the simplification, 
note that Equation~\ref{eq:goodpoly} corresponds to a VCSP instance $\mathcal{C}$ comprising a nullary constraint $C_\emptyset$, unary constraints $C_i = \uC{0}{c_i}$, and binary constraints $C_{ij} = \bC{0}{0}{0}{c_{ij}}$.
\end{proof}
The next result shows that a simple VCSP instance 
has the minimal constraint graph of any binary instance that implements the same fitness function:
\begin{theorem}
\label{thm:valmin}
Let $\mathcal{C}$ be a simple binary Boolean VCSP instance.
If the binary Boolean VCSP instance $\mathcal{C'}$ is magnitude-equivalent to $\mathcal{C}$, then 
$E(\mathcal{C}) \subseteq E(\mathcal{C'})$.
\end{theorem}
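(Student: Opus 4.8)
The plan is to show the contrapositive in a quantitative way: if $\{i,j\} \notin E(\mathcal{C}')$, then the coefficient $c_{ij}$ extracted from $\mathcal{C}'$ by the formula in Theorem~\ref{thm:simplerep} must be zero, hence $\{i,j\} \notin E(\mathcal{C})$ as well (since $\mathcal{C}$, being simple, has edge $\{i,j\}$ iff its weight $c_{ij}\neq 0$, and uniqueness of the simple form forces $\mathcal{C}$ to be exactly the simple instance produced from $\mathcal{C}'$). So the core is: the value of the degree-2 coefficient $c_{ij}$ of the unique multilinear polynomial representing $f$ depends only on whether a binary constraint on $\{i,j\}$ is present, and it vanishes when no such constraint appears.

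The key steps, in order. First, invoke Theorem~\ref{thm:simplerep} to turn $\mathcal{C}'$ into its unique simple magnitude-equivalent instance; by uniqueness this must coincide with $\mathcal{C}$, so it suffices to compare $E(\mathcal{C})$ with $E(\mathcal{C}')$ where $E(\mathcal{C})$ is read off from the weights via Equation~\ref{eq:simple_cij}. Second, observe that Equation~\ref{eq:simple_cij} gives
\[
c_{ij} = C'_{ij}(0,0) - C'_{ij}(0,1) - C'_{ij}(1,0) + C'_{ij}(1,1),
\]
which is manifestly $0$ whenever there is no binary constraint $C'_{ij}$ in $\mathcal{C}'$ (i.e.\ $C'_{ij} = 0$). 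Third, conclude: for every $\{i,j\} \in E(\mathcal{C})$ we have $c_{ij} \neq 0$, so $C'_{ij}$ cannot be the zero constraint, so $\{i,j\} \in E(\mathcal{C}')$; this is exactly $E(\mathcal{C}) \subseteq E(\mathcal{C}')$.

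An alternative, more self-contained route avoids leaning on the exact coefficient formula: represent $f$ as the unique multilinear polynomial over $\{0,1\}^n$ (uniqueness being the standard fact that the $2^n$ monomials $\prod_{i \in T} x_i$ are linearly independent as functions on $\{0,1\}^n$), note that a binary constraint on scope $\{i,j\}$ contributes only monomials in $x_i, x_j$ (so the coefficient of $x_ix_j$ is unaffected by any constraint whose scope omits $i$ or $j$), and that $\mathcal{C}$ being simple means its $x_ix_j$-coefficient is precisely the weight $c_{ij}$, nonzero exactly on $E(\mathcal{C})$. Then $\{i,j\}\notin E(\mathcal{C}')$ forces the $x_ix_j$-coefficient of $f$ to be $0$, contradicting $\{i,j\}\in E(\mathcal{C})$.

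The only subtlety — and the place I'd be careful — is the appeal to uniqueness: one must make sure that "the simple instance constructed from $\mathcal{C}'$ equals $\mathcal{C}$" is legitimate, which follows because $\mathcal{C}$ is already simple and magnitude-equivalent to itself, and Theorem~\ref{thm:simplerep} asserts the simple representative is unique within a magnitude-equivalence class. Everything else is a one-line computation. I expect no real obstacle here; the statement is essentially a restatement of the fact that the degree-2 part of the multilinear expansion of $f$ is an invariant of $f$ and that simple instances wear that invariant on their sleeve.
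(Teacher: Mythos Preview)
Your proposal is correct and follows essentially the same idea as the paper: both extract the coefficient of $x_ix_j$ in the multilinear expansion of $f$ and observe that it equals $c_{ij}$ on the one hand and $C'_{ij}(0,0) - C'_{ij}(0,1) - C'_{ij}(1,0) + C'_{ij}(1,1)$ on the other. The paper's presentation is slightly more self-contained in that it evaluates this quantity directly as $f(e_{ij}) - f(e_i) - f(e_j) + f(0^n)$ rather than routing through the uniqueness clause of Theorem~\ref{thm:simplerep}, but the underlying computation is identical.
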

\begin{proof}
Let $e_i \in \{0,1\}^n$ be a variable assignment that sets the $i$th variable to one, and all other variables to zero.
Similarly, let $e_{ij} \in \{0,1\}^n$ be a variable assignment that sets the $i$th and $j$th variables to one, and all other variables to zero.
Let $f$ be the fitness function implemented by $\mathcal{C}$.
Since $\mathcal{C}$ is simple, we have:
\begin{equation*}
    f(e_{ij}) - f(e_i) - f(e_j) + f(0^n) = c_{ij}
\end{equation*}
where we take $c_{ij} = 0$ if $\{i,j\} \notin E(\mathcal{C})$.
Similarly, if $\mathcal{C}'$ also implements $f$, we have:
\begin{equation*}
    f(e_{ij}) - f(e_i) - f(e_j) + f(0^n) = C'_{ij}(1,1) - C'_{ij}(1,0) - C'_{ij}(0,1) + C'_{ij}(0,0)
\end{equation*}
If $\{i,j\} \in E(\mathcal{C})$ then $c_{ij} \neq 0$, so 
$C'_{ij}(1,1) - C'_{ij}(1,0) - C'_{ij}(0,1) + C'_{ij}(0,0) \neq 0$ 
and hence $\{i,j\} \in E(\mathcal{C}')$.
\end{proof}

\section{Sign-Equivalence}
\label{sec:signeq}

In the previous section we considered the equivalence class of all VCSP instances which implement 
precisely the same fitness function. However, when investigating the performance of local search 
algorithms, the exact values of the fitness function are not always relevant; it may be sufficient to 
consider only the fitness graph. 

For example, consider a fitness function $f$, implemented by a VCSP instance $\mathcal{C}$,
where all fitness values are distinct,
but there is at least one pair $i,j$ of positions with no constraint $C_{ij}$. 
Now consider the new fitness function $f'(x) = 2f(x) + C_{ij}(x_i,x_j)$ where $C_{ij} = [0,0;0,1]$.
The fitness graph corresponding to $f'$ is unchanged 
(since all fitness values given by $2f(x)$ differ by 
at least 2, every edge is still present in the fitness graph, 
and no orientations are changed by the new constraint), but
we cannot eliminate this new $C_{ij}$ constraint without changing the precise values of the fitness function at some points.
To capture this similarity between $f$ and $f'$, we introduce a more abstract equivalence relation:
\begin{definition}
\label{def:signequiv}
If two VCSP instances $\mathcal{C}_1$ and $\mathcal{C}_2$ give rise to the same fitness graph,
then we will say they are \emph{\bf sign-equivalent}.
\end{definition}
As with magnitude-equivalence, we will show that for binary Boolean VCSP instances
it is possible to define a normal form or minimal representative member of each equivalence class of sign-equivalent VCSP instances with a unique minimal constraint graph.
Unfortunately, we will see that, unlike the situation for magnitude-equivalence, 
this minimum sign-equivalent constraint graph is NP-hard to compute.
\begin{definition}
In a Boolean fitness graph $G$ with vertex set $\{0,1\}^n$, 
we will say that $i$ \textbf{sign-depends} on $j$ 
if there exists an assignment $x \in \{0,1\}^n$ such that:
\begin{equation*}
    (x,x[i \mapsto \bar{x}_i]) \in E(G) 
    \quad \text{but} \quad 
    (x[j \mapsto \bar{x}_j],x[i \mapsto \bar{x}_i,j \mapsto \bar{x}_j]) \not\in E(G)
\end{equation*}
\label{def:signDep}
\end{definition}
Note that $i$ \textbf{sign-depends} on $j$ if and only if, for any fitness function $f$ that corresponds to the fitness graph $G$, there exists $x \in \{0,1\}^n$ such that:
\begin{equation}
\label{eq:signdep2}
    \text{sgn}(f(x[i \mapsto \bar{x}_i]) - f(x)) \neq \sgn(f(x[i \mapsto \bar{x}_i,j \mapsto \bar{x}_j]) - f(x[j \mapsto \bar{x}_j])).
\end{equation}
We will say that $i$ and $j$ \textbf{sign-interact} if $i$ sign-depends on $j$, 
or $j$ sign-depends on $i$ (or both).
If $i$ and $j$ do not sign-interact then we will say that they are \textbf{sign-independent}.

\begin{definition}
\label{def:trim}
A simple binary Boolean VCSP instance $\mathcal{C}$ with associated fitness graph $G_\mathcal{C}$ 
is called  \emph{\bf trim} if for all $\{i,j\} \in E(\mathcal{C})$, 
$i$ and $j$ sign-interact in $G_\mathcal{C}$.
\end{definition}
Our next result is the sign-equivalence analog of Theorem~\ref{thm:simplerep}, and guarantees a normal form:
\begin{theorem}
\label{thm:trimprep}
Any simple binary Boolean VCSP instance $\mathcal{C}'$ can be transformed into a trim VCSP instance $\mathcal{C}$ that is sign-equivalent to $\mathcal{C}'$.
\end{theorem}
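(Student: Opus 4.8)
The plan is to show that a single ``bad'' binary constraint can always be deleted: if a simple instance $\mathcal{C}$ contains a constraint $C_{ij}\neq 0$ whose two variables are sign-independent in $G_\mathcal{C}$, then the instance obtained by deleting $C_{ij}$ (and changing nothing else --- in particular, not the unary constraints) is still simple, has exactly the same fitness graph, and has one fewer edge in its constraint graph. Deleting such a constraint while any remains must terminate, since $|E(\mathcal{C})|$ strictly decreases; the instance at which it halts has the endpoints of every edge sign-interacting, so it is trim, and since every deletion preserved the fitness graph it is sign-equivalent to the original $\mathcal{C}'$.

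To set up the deletion I would first record how the fitness graph of a simple instance is read off from its weights. Writing $f$ for the implemented fitness function and $\delta_k(x) = f(x[k\mapsto 1]) - f(x[k\mapsto 0]) = c_k + \sum_{l\neq k} x_l c_{kl}$, multilinearity gives that $\delta_k(x)$ does not depend on $x_k$, and a one-line check on $x_k\in\{0,1\}$ shows that $(x, x[k\mapsto \bar x_k]) \in E(G_\mathcal{C})$ if and only if $(1-2x_k)\,\delta_k(x) > 0$. Thus the edges in coordinate direction $k$ depend only on the sign of $\delta_k$, and the weight $c_{ij}$ influences $f$ only through $\delta_i$ and $\delta_j$: it adds $c_{ij}$ to $\delta_i(x)$ exactly when $x_j = 1$, and to $\delta_j(x)$ exactly when $x_i = 1$.

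The crux is the following consequence of sign-independence: \emph{if $i$ does not sign-depend on $j$ and $c_{ij}\neq 0$, then $\delta_i(x)\neq 0$ for every $x$, and $\delta_i(x)$ and $\delta_i(x[j\mapsto\bar x_j])$ always have the same sign.} Unwinding Definition~\ref{def:signDep} together with the edge criterion above, ``$i$ does not sign-depend on $j$'' says precisely that $(1-2x_i)\,\delta_i(x) > 0$ implies $(1-2x_i)\,\delta_i(x[j\mapsto\bar x_j]) > 0$ for all $x$; letting $x_i$ take both values (legitimate, since $\delta_i$ ignores $x_i$) turns this into the statement that $\delta_i(x)$ and $\delta_i(x[j\mapsto\bar x_j])$ are both nonzero and of the same sign whenever either is nonzero, and since $c_{ij}\neq 0$ forces $\delta_i(x)\neq\delta_i(x[j\mapsto\bar x_j])$, they cannot both be zero. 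I expect this to be the main obstacle: the non-obvious point is that sign-independence forces $\delta_i$ to avoid the value $0$ everywhere, which is exactly what stops a shift by $c_{ij}$ from creating a spurious new edge.

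Given the lemma, deleting $C_{ij}$ replaces $\delta_i$ by $\delta_i(x) - x_j c_{ij}$ and $\delta_j$ by $\delta_j(x) - x_i c_{ij}$ while leaving every other $\delta_k$ unchanged. In a direction $k\notin\{i,j\}$ nothing changes; in direction $i$, when $x_j = 0$ the value $\delta_i(x)$ is unchanged, and when $x_j = 1$ the new value equals $\delta_i(x[j\mapsto 0])$, which by the lemma has the same nonzero sign as $\delta_i(x)$, so the criterion $(1-2x_i)\,\delta_i(x)>0$ has the same truth value and the same edges appear; direction $j$ is handled symmetrically using that $j$ does not sign-depend on $i$. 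Hence $G_\mathcal{C}$ is preserved by the deletion, which is what was needed, and the termination argument of the first paragraph then yields a trim, sign-equivalent instance.
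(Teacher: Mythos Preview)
Your proof is correct and follows the same overall strategy as the paper: show that a constraint $C_{ij}$ between sign-independent positions can be removed without changing the fitness graph, and then delete all such constraints. The paper factors this implication through an explicit intermediate numerical condition (Propositions~\ref{prop:strongBig} and~\ref{prop:magSignDep}): removing $C_{ij}$ changes the fitness graph $\Rightarrow$ some $\delta_k(x)$ with $x_i=x_j=1$ is sandwiched between $0$ and $c_{ij}$ $\Rightarrow$ $i$ and $j$ sign-interact; the desired deletion then follows by contraposition. You instead work directly with the discrete partial derivatives $\delta_i$, observing that ``$i$ does not sign-depend on $j$'' together with $c_{ij}\neq 0$ forces $\delta_i$ to be everywhere nonzero and sign-invariant under $j$-toggles, so that replacing $\delta_i(x)$ by $\delta_i(x[j\mapsto 0])$ (which is what deletion does) preserves all edges in direction $i$. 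This collapses the paper's two propositions into a single step and makes the role of the nonvanishing of $\delta_i$ more transparent; the underlying content is the same, and both arguments rely on the fact that the fitness graph itself is unchanged, so the set of sign-independent pairs is stable throughout the deletions.
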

To prove Theorem~\ref{thm:trimprep} we now establish two propositions: 
Proposition~\ref{prop:strongBig} connects the magnitude of constraints with their effect on fitness graphs, 
and Proposition~\ref{prop:magSignDep} connects the magnitude of constraints to sign-interaction.
\begin{proposition}
\label{prop:strongBig}
Given a simple binary Boolean VCSP instance $\mathcal{C}$ implementing a fitness function $f$,
if removing the constraint $C_{ij}$ changes the corresponding fitness graph, 
then for at least one $k \in \{i,j\}$ there exists some $x \in \{0,1\}^n$ with $x_i = x_j = 1$ such that:
\begin{equation}
c_{ij} \geq f(x) - f(x[k \mapsto 0])  > 0
\quad \text{or} \quad 
c_{ij} \leq f(x) - f(x[k \mapsto 0])  < 0
\label{eq:cij<0}
\end{equation}
\end{proposition}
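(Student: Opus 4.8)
The plan is to compare $f$ with the fitness function $f'$ implemented by the instance obtained from $\mathcal{C}$ by deleting the constraint $C_{ij}$. Since $\mathcal{C}$ is simple we have $C_{ij}=\bC{0}{0}{0}{c_{ij}}$, so $f(x)=f'(x)+x_ix_jc_{ij}$; in particular $f$ and $f'$ agree on every assignment with $x_i=0$ or $x_j=0$, while $f(x)=f'(x)+c_{ij}$ whenever $x_i=x_j=1$. Because deleting $C_{ij}$ changes the fitness graph, there is a $1$-flip adjacent pair $\{u,v\}$ whose status differs between $G_f$ and $G_{f'}$: it is a directed edge of exactly one of the two graphs, or a directed edge of both but with opposite orientation.

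First I would pin down this pair. Let $k$ be the position flipped along $\{u,v\}$. If $k\notin\{i,j\}$, or if $k\in\{i,j\}$ but the other of $i,j$ is set to $0$ at the endpoints of $\{u,v\}$, then $u_iu_j=v_iv_j$, hence $f(u)-f'(u)=f(v)-f'(v)$ and so $f(u)-f(v)=f'(u)-f'(v)$, contradicting the change of status. Therefore $k\in\{i,j\}$ and the other of $i,j$ equals $1$ at both endpoints; taking $x$ to be the endpoint with $x_k=1$ gives $x_i=x_j=1$, the other endpoint equal to $x[k\mapsto 0]$, and $k$ the index named in the statement.

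Next, set $\delta=f(x)-f(x[k\mapsto 0])$ and $\delta'=f'(x)-f'(x[k\mapsto 0])$. Since $x_i=x_j=1$ while $x[k\mapsto 0]$ sets one of $i,j$ to $0$, the opening remarks give $f(x)=f'(x)+c_{ij}$ and $f(x[k\mapsto 0])=f'(x[k\mapsto 0])$, so $\delta=\delta'+c_{ij}$. The status of $\{u,v\}$ in $G_f$ (respectively, $G_{f'}$) is determined by $\sgn(\delta)$ (respectively, $\sgn(\delta')$), so the hypothesis is exactly that $\sgn(\delta)\neq\sgn(\delta')$. A short case analysis using $\delta-\delta'=c_{ij}\neq 0$ then confines $\delta$: if $c_{ij}>0$ then $\delta>\delta'$, and $\sgn(\delta)\neq\sgn(\delta')$ forces $\delta'\le 0\le\delta$, whence $0\le\delta=\delta'+c_{ij}\le c_{ij}$; the case $c_{ij}<0$ is symmetric. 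This establishes the inequality of~(\ref{eq:cij<0}) with a weak lower bound in place of the strict one.

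The hard part is the borderline sub-case $\delta=0$, which occurs precisely when deleting $C_{ij}$ \emph{creates} the edge $\{x,x[k\mapsto 0]\}$ in $G_{f'}$: here $f(x)=f(x[k\mapsto 0])$, so at this particular $x$ and $k$ neither strict inequality of~(\ref{eq:cij<0}) holds. To recover the strict statement one has to produce a different witness — the partner index $k'\in\{i,j\}\setminus\{k\}$ at the same subcube, or another subcube obtained by flipping one of the remaining $n-2$ variables — and verify that its fitness difference lands strictly between $0$ and $c_{ij}$. Controlling this degenerate case (equivalently, ruling out that every change of status forced by $C_{ij}$ is of this edge-creating, equal-value type) is the step I expect to require the most care.
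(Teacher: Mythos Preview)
Your argument up through the weak bound $0\le\delta\le c_{ij}$ (and its mirror for $c_{ij}<0$) is exactly the paper's proof, unpacked more carefully: in the case $c_{ij}>0$ the paper simply asserts that a change in the fitness graph yields some $x$ with $f(x)>f(x[i\mapsto 0])$ but $f(x)-c_{ij}\le f(x[i\mapsto 0])$, i.e.\ it tacitly takes $\delta>0$ and never discusses the $\delta=0$ sub-case you flagged.

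Your worry about that sub-case is well-founded, but the fix you sketch --- switching to the partner index or another subcube --- cannot work, because the Proposition is \emph{false} as stated. Take $n=2$ with simple weights $c_1=-2$, $c_2=3$, $c_{12}=2$: then $f(11)=f(01)=3$ and $f(10)=-2$, removing $C_{12}$ creates the edge $11\to 01$ (and changes nothing else), yet for the only $x$ with $x_1=x_2=1$ one has $f(11)-f(01)=0$ and $f(11)-f(10)=5$, neither strictly between $0$ and $2$. This does not damage Theorem~\ref{thm:trimprep}, since when $\delta=0$ one verifies sign-dependence directly: with $x_i=x_j=1$ and $k=i$, simplicity gives $f(x[i\mapsto 0])-f(x)=-\delta=0$ while $f(x[i\mapsto 0,j\mapsto 0])-f(x[j\mapsto 0])=c_{ij}-\delta=c_{ij}\ne 0$, so the two sides of Equation~(\ref{eq:signdep2}) have distinct signs. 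The clean repair is to relax the strict inequality in both Propositions~\ref{prop:strongBig} and~\ref{prop:magSignDep} to a non-strict one --- which your argument already establishes --- and absorb the boundary case into the proof of Proposition~\ref{prop:magSignDep}.
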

\begin{proof}
Without loss of generality (by swapping $i$ and $j$ in the variable numbering if necessary), 
we can suppose that $k = i$.
Consider two cases:
\begin{description}
\item[\textit{Case 1}] ($c_{ij} > 0$):
If removing $C_{ij}$  changes the fitness graph, then there exists some 
$x \in \{0,1\}^n$ with $x_i = x_j = 1$ such that:
\begin{equation}
    f(x) > f(x[i \mapsto 0])
    \text{ \; but \;}
    f(x) - c_{ij} \leq f(x[i \mapsto 0]).
    \label{eq:posnotpos}
\end{equation}
We can re-arrange Equation~\ref{eq:posnotpos} to get
$
    c_{ij} \geq f(x) - f(x[i \mapsto 0]) > 0
$.

\item[\textit{Case 2}] ($c_{ij} < 0$): 
This is the same as case 1, except that the direction of the inequalities in Equation~\ref{eq:posnotpos} are reversed.
\end{description}
\vspace{-0.6cm}
\end{proof}
\vspace{-0.6cm}
\begin{proposition}
\label{prop:magSignDep}
Given a simple binary Boolean VCSP instance $\mathcal{C}$ implementing a fitness function $f$,
if there exists a constraint $C_{ij}$ in $\mathcal{C}$, some assignment $x \in \{0,1\}^n$ 
with $x_i = x_j = 1$, and some $k \in \{i,j\}$ such that:
\begin{equation}
c_{ij} \geq f(x) - f(x[k \mapsto 0])  > 0
\quad \text{or} \quad 
c_{ij} \leq f(x) - f(x[k \mapsto 0])  < 0
\label{eq:cij>0Prop}
\end{equation}
then $i$ sign-depends on $j$ in the associated fitness graph $G_\mathcal{C}$.
\end{proposition}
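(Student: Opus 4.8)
The plan is to unfold Definition~\ref{def:signDep} and exhibit an explicit witness assignment directly from the hypothesis. We are given $x$ with $x_i = x_j = 1$ and (say, WLOG, $k = i$, the case $k = j$ being symmetric by swapping the roles of $i$ and $j$) the sandwich inequality $c_{ij} \geq f(x) - f(x[i \mapsto 0]) > 0$ (or the reversed version with all signs flipped). To show $i$ sign-depends on $j$, I must produce some $y \in \{0,1\}^n$ with $(y, y[i \mapsto \bar y_i]) \in E(G_{\mathcal{C}})$ but $(y[j \mapsto \bar y_j], y[i \mapsto \bar y_i, j \mapsto \bar y_j]) \notin E(G_{\mathcal{C}})$. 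The natural candidate is $y = x[i \mapsto 0]$, so that $y_i = 0$, $y_j = 1$, $y[i \mapsto \bar y_i] = x$, $y[j \mapsto \bar y_j] = x[i \mapsto 0, j \mapsto 0]$, and $y[i \mapsto \bar y_i, j \mapsto \bar y_j] = x[j \mapsto 0]$.

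With this choice the two edge conditions translate into statements about $f$. The first, $(y, y[i\mapsto\bar y_i]) \in E(G_{\mathcal C})$, is exactly $f(x) > f(x[i \mapsto 0])$, which is immediate from the hypothesis $f(x) - f(x[i \mapsto 0]) > 0$. The second condition — that $(x[j\mapsto 0], x[j\mapsto 0][i\mapsto 1]) = (x[j \mapsto 0], x[j\mapsto 0, i \mapsto 1])$... wait, I need to be careful: $y[i\mapsto\bar y_i, j\mapsto \bar y_j] = x[i\mapsto 0][i \mapsto 1][j\mapsto 0]$; since setting $i$ to $0$ then back to... no. Let me restate: $y = x[i\mapsto 0]$ has $y_i = 0$, so $\bar y_i = 1$, and $y[i\mapsto\bar y_i] = x[i\mapsto 0][i\mapsto 1] = x$ (since all other coordinates of $y$ agree with $x$). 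Then $y[i\mapsto\bar y_i,j\mapsto\bar y_j] = x[j\mapsto 0]$. So the second edge condition's negation is $f(x[j\mapsto 0]) \leq f(x[i\mapsto 0, j\mapsto 0])$. The key computation is then: using simplicity of $\mathcal{C}$, the quantity $\big(f(x) - f(x[i\mapsto 0])\big) - \big(f(x[j\mapsto 0]) - f(x[i\mapsto 0,j\mapsto 0])\big)$ equals exactly $c_{ij}$ (this is the standard second-difference identity, since in a simple instance only the $C_{ij}$ term and the $C_i$ term change when flipping $i$, and the $C_i$ contribution cancels in the difference of differences). Hence $f(x[j\mapsto 0]) - f(x[i\mapsto 0,j\mapsto 0]) = \big(f(x) - f(x[i\mapsto 0])\big) - c_{ij} \leq 0$ by the hypothesis $c_{ij} \geq f(x) - f(x[i\mapsto 0])$, giving the required non-edge.

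The case $c_{ij} < 0$ runs identically with all inequalities reversed: then $f(x) < f(x[i\mapsto 0])$ gives the edge $(x, x[i\mapsto 0]) \in E(G_{\mathcal C})$ in the opposite orientation (i.e., $(x[i\mapsto 0],x)$... one must double-check the direction carefully, but Definition~\ref{def:signDep} only concerns whether the ordered pair is in $E(G)$, and the symmetric statement obtained by also choosing the witness with the flips arranged to match works out), and the second-difference identity together with $c_{ij} \leq f(x) - f(x[i\mapsto 0])$ forces the corresponding pair out of $E(G_{\mathcal C})$. I expect the only real subtlety — the "main obstacle" — to be bookkeeping: getting the orientations of the edges in Definition~\ref{def:signDep} to line up with the signs in the hypothesis, and making sure the witness $y$ is chosen so that the quadruple of points $\{x, x[i\mapsto 0], x[j\mapsto 0], x[i\mapsto 0,j\mapsto 0]\}$ appears in the roles Definition~\ref{def:signDep} demands. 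Everything else is the elementary second-difference identity, valid because $\mathcal{C}$ is simple.
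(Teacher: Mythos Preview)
Your proposal is correct and follows essentially the same route as the paper's proof: reduce by symmetry to $k=i$ and $c_{ij}>0$, read off the edge $(x[i\mapsto 0],x)$ from the hypothesis $f(x)-f(x[i\mapsto 0])>0$, and then use the second-difference identity $f(x[j\mapsto 0])-f(x[i\mapsto 0,j\mapsto 0]) = \big(f(x)-f(x[i\mapsto 0])\big)-c_{ij}\le 0$ to kill the corresponding edge after flipping $j$. Your explicit naming of the witness $y=x[i\mapsto 0]$ (and $y=x$ in the negative case) is exactly what the paper's computation encodes implicitly.
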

\begin{proof}
As in the proof of Proposition~\ref{prop:strongBig}, 
we can suppose that $k = i$ (by swapping $i$ and $j$ in the variable numbering if necessary).
Also, as in the proof of Proposition~\ref{prop:strongBig}, 
the case for $c_{ij} < 0$ is symmetric 
(by flipping the direction of inequalities) to $c_{ij} > 0$.
Thus, we will just consider the case where $k = i$ and $c_{ij} > 0$.

Given that Equation~\ref{eq:cij>0Prop} tells us that $f(x) > f(x[i \mapsto 0])$
(i.e., that $(x[i \mapsto 0],x) \in E(G_\mathcal{C})$), to establish that $i$ sign-depends on $j$ per Definition~\ref{def:signDep}, we need to show that $f(x[j \mapsto 0]) \leq f(x[i \mapsto 0,j \mapsto 0])$ (i.e., that $(x[i \mapsto 0, j \mapsto 0],x[j \mapsto 0]) \not\in E(G_\mathcal{C})$).
So, let us look at the difference of the latter:
\begin{equation*}
    f(x[j \mapsto 0]) - f(x[i \mapsto 0,j \mapsto 0]) = f(x) - f(x[i \mapsto 0]) - c_{ij}  \leq 0 
\end{equation*}
where the equality follows from Definition~\ref{def:VCSP} ($\mathcal{C}$ implements $f$) 
and Definition~\ref{def:simple} ($\mathcal{C}$ is simple),
and the inequality follows from the first part of Equation~\ref{eq:cij>0Prop}.
\end{proof}
\begin{proof}[of \textbf{Theorem~\ref{thm:trimprep}}]
Note that Equations~\ref{eq:cij<0} and~\ref{eq:cij>0Prop} specify the same conditions,
hence the negation of this condition
can be used to glue together the contrapositives of Proposition~\ref{prop:magSignDep} (if $i$ and $j$ are sign-independent then Equation~\ref{eq:cij<0} does not hold) 
and Proposition~\ref{prop:strongBig} (if Equation~\ref{eq:cij>0Prop} does not hold then $C_{ij}'$ can be removed from $\mathcal{C}'$ without changing the corresponding fitness graph).
So we can convert $\mathcal{C'}$ to a trim VCSP instance 
that is sign-equivalent to $\mathcal{C'}$ by simply removing all $C'_{ij} \in \mathcal{C}'$ where $i$ and $j$ are sign-independent in the associated fitness graph $G_{\mathcal{C}'}$.
\end{proof}
The next result is the sign-equivalence analog of Theorem~\ref{thm:valmin}.
It shows that a trim VCSP instance has the minimal constraint graph of any binary instance with 
the same associated fitness graph. 
\begin{theorem}
\label{thm:signmin}
Let $\mathcal{C}$ be a trim binary Boolean VCSP instance.
If the binary Boolean VCSP instance $\mathcal{C}'$ is sign-equivalent to $\mathcal{C}$, then
$E(\mathcal{C}) \subseteq E(\mathcal{C'})$.
\end{theorem}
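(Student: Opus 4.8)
The plan is to mimic the structure of the proof of Theorem~\ref{thm:valmin}, but replace the single algebraic witness (the identity $f(e_{ij}) - f(e_i) - f(e_j) + f(0^n) = c_{ij}$) with a witness that is robust to the weaker equivalence we now have. Since $\mathcal{C}$ is trim, for every $\{i,j\}\in E(\mathcal{C})$ we know that $i$ and $j$ sign-interact in $G_\mathcal{C}$; by Definition~\ref{def:signequiv} the instance $\mathcal{C}'$ gives rise to the \emph{same} fitness graph $G_{\mathcal{C}} = G_{\mathcal{C}'}$, so $i$ and $j$ also sign-interact in $G_{\mathcal{C}'}$. The goal is therefore to show that if $\mathcal{C}'$ is simple and $i,j$ sign-interact in $G_{\mathcal{C}'}$, then $c'_{ij}\neq 0$, i.e.\ $\{i,j\}\in E(\mathcal{C}')$. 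Note we first reduce to $\mathcal{C}'$ simple without loss of generality: by Theorem~\ref{thm:simplerep} replace $\mathcal{C}'$ by its unique simple magnitude-equivalent instance; this is magnitude-equivalent, hence a fortiori sign-equivalent, and by Theorem~\ref{thm:valmin} it has a constraint-edge set contained in that of the original $\mathcal{C}'$, so proving $E(\mathcal{C})\subseteq E(\text{simple }\mathcal{C}')$ suffices.

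Next I would establish the contrapositive of the reduced claim: if $\mathcal{C}'$ is simple and $c'_{ij}=0$ (so $C'_{ij}=0$, equivalently there is no constraint on the pair $\{i,j\}$), then $i$ and $j$ are sign-independent in $G_{\mathcal{C}'}$. This is where the simple normal form does the work: because $\mathcal{C}'$ is simple and has no $i$--$j$ constraint, changing $x_j$ has no effect on the $x_i$-difference of $f'$. Concretely, for any $x$ with $x_i=x_j=1$,
\begin{equation*}
f'(x) - f'(x[i\mapsto 0]) = f'(x[j\mapsto 0]) - f'(x[i\mapsto 0, j\mapsto 0]),
\end{equation*}
since the only term in the multilinear expansion of $f'$ that simultaneously involves $x_i$ and depends on $x_j$ would be a $c'_{ij}x_ix_j$ term, which vanishes. (The same identity holds with the roles of $0$ and $1$ in coordinate $j$ swapped, and symmetrically in coordinate $i$.) Hence the two fitness differences always have the same sign, so neither of the conditions in Definition~\ref{def:signDep} / Equation~\ref{eq:signdep2} can ever be met, establishing that $i$ and $j$ are sign-independent. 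Taking the contrapositive gives: $i,j$ sign-interact in $G_{\mathcal{C}'}$ $\Rightarrow$ $\{i,j\}\in E(\mathcal{C}')$.

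Finally I would assemble the pieces: for $\{i,j\}\in E(\mathcal{C})$, trimness gives sign-interaction in $G_\mathcal{C}$, sign-equivalence transfers this to sign-interaction in $G_{\mathcal{C}'}$, and the previous paragraph (applied to the simple reduction of $\mathcal{C}'$) forces $\{i,j\}$ to be an edge there, hence an edge of $\mathcal{C}'$. I expect the main subtlety to be the bookkeeping around the reduction to a simple $\mathcal{C}'$ — one must be careful that ``sign-equivalent'' is preserved and that Theorem~\ref{thm:valmin} is invoked in the right direction — rather than the algebraic core, which is a one-line consequence of simplicity. A cleaner alternative that avoids the reduction entirely is to prove directly that if $i,j$ sign-interact in $G_{\mathcal{C}'}$ then the quantity $C'_{ij}(1,1) - C'_{ij}(1,0) - C'_{ij}(0,1) + C'_{ij}(0,0)$ is nonzero (this quantity is invariant across all magnitude-equivalent instances, by the computation in the proof of Theorem~\ref{thm:valmin}, and equals $c_{ij}$ for the simple $\mathcal{C}$); I would present whichever of these two routes turns out to be shorter.
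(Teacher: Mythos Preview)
Your proposal is correct, and in fact the ``cleaner alternative'' you sketch at the end is exactly the route the paper takes: it isolates a single proposition (Proposition~\ref{prop:signNonZero}) stating that for \emph{any} binary Boolean VCSP instance $\mathcal{C}'$, if $i$ and $j$ sign-interact in $G_{\mathcal{C}'}$ then $C'_{ij}$ is non-zero. The paper proves this directly, without first reducing to a simple instance, by expanding $f'$ as $K_{ij}(x) + g_i(x) + g_j(x) + C'_{ij}(x_i,x_j)$ and deriving the strict inequality $C'_{ij}(x_i,x_j) - C'_{ij}(\bar{x}_i,x_j) > C'_{ij}(x_i,\bar{x}_j) - C'_{ij}(\bar{x}_i,\bar{x}_j)$, which is precisely the non-vanishing of the second-difference quantity you mention. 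The theorem then follows in one line: trimness gives sign-interaction in $G_\mathcal{C} = G_{\mathcal{C}'}$, and the proposition forces $\{i,j\}\in E(\mathcal{C}')$.

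Your primary route via reduction to simple is a mild detour but entirely sound; the bookkeeping you flag (that simplification preserves sign-equivalence and only shrinks the edge set, by Theorems~\ref{thm:simplerep} and~\ref{thm:valmin}) is handled correctly. The paper's direct argument buys you one fewer dependency and applies to non-simple $\mathcal{C}'$ without any preprocessing, which is why it is the shorter of the two options you were weighing.
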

To prove Theorem~\ref{thm:signmin}, we just need to show that 
constraints between sign-interacting positions cannot be removed while preserving sign-equivalence.
That is, we just need the following proposition:
\begin{proposition}
\label{prop:signNonZero}
Let $\mathcal{C}$ be a binary Boolean VCSP instance with associated fitness graph $G_\mathcal{C}$.
If $i,j$ sign-interact in $G_\mathcal{C}$,
then the constraint $C_{ij}$ in $\mathcal{C}$ is non-zero.
\end{proposition}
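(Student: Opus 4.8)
The plan is to prove the contrapositive: if $C_{ij}$ is the zero constraint in $\mathcal{C}$, then $i$ and $j$ do not sign-interact in $G_{\mathcal{C}}$. The key observation is that if $C_{ij} = 0$ then flipping the value of variable $j$ does not change the contribution of any constraint that involves $i$, and vice versa. So I would start by isolating, for a fixed assignment $x \in \{0,1\}^n$, the quantity $f(x[i \mapsto \bar x_i]) - f(x)$ that governs whether the edge $(x, x[i \mapsto \bar x_i])$ appears in $G_{\mathcal{C}}$. Splitting $f = \sum_k C_{S_k}$ into the constraints whose scope contains $i$ and those whose scope does not, the latter cancel in the difference, leaving only the unary constraint $C_i$ and the binary constraints $C_{i\ell}$ for $\ell \in N_{\mathcal{C}}(i)$.

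Next I would observe that, because $C_{ij} = 0$ (equivalently $\{i,j\} \notin E(\mathcal{C})$, using that we may assume $\mathcal{C}$ simple after applying Theorem~\ref{thm:simplerep}, or directly that $j \notin N_{\mathcal{C}}(i)$), none of the surviving terms $C_i, C_{i\ell}$ depends on the $j$-coordinate. Hence the difference $f(x[i \mapsto \bar x_i]) - f(x)$ is unchanged if we also flip coordinate $j$; that is,
\[
f(x[i \mapsto \bar x_i]) - f(x) \;=\; f(x[i \mapsto \bar x_i, j \mapsto \bar x_j]) - f(x[j \mapsto \bar x_j]).
\]
In particular the two differences have the same sign, so the condition in Definition~\ref{def:signDep} (or equivalently Equation~\ref{eq:signdep2}) can never be met: $i$ does not sign-depend on $j$. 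By the symmetric argument with the roles of $i$ and $j$ exchanged — here using that a zero binary constraint $C_{ij}$ is symmetric, so $C_{ij} = 0$ also means $i \notin N_{\mathcal{C}}(j)$ — $j$ does not sign-depend on $i$ either. Therefore $i$ and $j$ are sign-independent, which is the contrapositive of the claim.

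The only mild subtlety is the reduction to a \emph{simple} instance at the start: Proposition~\ref{prop:signNonZero} is stated for an arbitrary binary Boolean $\mathcal{C}$, whereas the clean cancellation argument is cleanest when all constraints are in simple form. One can either first apply Theorem~\ref{thm:simplerep} to replace $\mathcal{C}$ by a magnitude-equivalent simple instance (which has the same fitness function, hence the same fitness graph, and in which the binary constraint on scope $\{i,j\}$ is zero iff $c_{ij} = 0$ iff it was "zero" in the sense of not affecting $f$), or just carry the coordinate-independence argument through directly for general constraints, noting that $C_{ij} = 0$ as a function means it contributes nothing regardless of form. I expect this bookkeeping — making sure "$C_{ij}$ is zero" is interpreted consistently with Definition~\ref{def:conG} and the simplification — to be the main thing to get right; the algebra itself is a one-line cancellation.
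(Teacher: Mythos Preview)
Your proof is correct and is essentially the contrapositive of the paper's direct argument: both isolate the part of $f$ that involves $i$ and observe that the $i$-flip difference is independent of $x_j$ when $C_{ij}=0$ (the paper phrases this as deriving, from sign-dependence, the strict inequality $C_{ij}(x_i,x_j) - C_{ij}(\bar x_i,x_j) > C_{ij}(x_i,\bar x_j) - C_{ij}(\bar x_i,\bar x_j)$, which forces $C_{ij}\neq 0$). Your worry about first reducing to a simple instance is unnecessary---the paper's proof, like your ``direct'' variant, works for arbitrary binary Boolean instances without invoking Theorem~\ref{thm:simplerep}.
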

\begin{proof}
Without loss of generality, assume that $i<j$ and we have an edge in $G_\mathcal{C}$ 
from $x[i \mapsto \bar{x_i}]$ to $x$.
Thus, the fitness function $f$ implemented by $\mathcal{C}$ must satisfy the following two inequalities:
\begin{equation}
f(x)  > f(x[i \mapsto \bar{x_i}]) 
\quad \text{and} \quad
f(x[j \mapsto \bar{x_j}]) \leq f(x[i \mapsto \bar{x_i}, j \mapsto \bar{x_j}]) \label{eq:edges}
\end{equation}
Define $g_i(x) = C_{i}(x_i) + \sum_{k \in  N_{\mathcal{C}}(i)\setminus \{j\}} C_{ik}(x_i,x_k)$ 
and similarly for $g_j$. 
Also let $K_{ij}(x)$ be the part of $f$ independent of $x_i,x_j$: 
i.e., $f(x) = K_{ij}(x) + g_i(x) + g_j(x) + C_{ij}(x_i,x_j)$. 
Rewriting (and simplifying) the two parts of Equation~\ref{eq:edges}, we get:
\begin{align*}
g_i(x) + C_{ij}(x_i,x_j) & > g_i(x[i \mapsto \bar{x_i}]) +  C_{ij}(\bar{x_i},x_j) \\ g_i(x) + C_{ij}(x_i,\bar{x_j}) & \leq 
g_i(x[i \mapsto \bar{x_i}]) +  C_{ij}(\bar{x_i},\bar{x_j}) 
\end{align*}
These equations can be rotated to sandwich the $g_i$ terms:
\begin{equation*}
    C_{ij}(x_i,x_j) - C_{ij}(\bar{x_i},x_j) > g_i(x[i \mapsto \bar{x_i}]) - g_i(x) 
    \geq C_{ij}(x_i,\bar{x_j}) -  C_{ij}(\bar{x_i},\bar{x_j})
\end{equation*}
which simplifies to 
$C_{ij}(x_i,x_j) - C_{ij}(\bar{x_i},x_j) > C_{ij}(x_i,\bar{x_j}) -  C_{ij}(\bar{x_i},\bar{x_j})$ 
and -- due to the strict inequality -- establishes that $C_{ij}$ is non-zero.
\end{proof}
Our next result shows that the signs of the constraint weights 
on this minimal constraint graph are preserved
across all simple VCSPs with the same fitness graph.
This gives another motivation for calling these representations sign-equivalent.
\begin{proposition}
\label{prop:samesign}
Let $\mathcal{C}$ be a simple trim binary Boolean VCSP instance.
If the simple binary Boolean VCSP instance $\mathcal{C}'$ is sign-equivalent to $\mathcal{C}$, then
\begin{enumerate}
\item for all $i \in [n]$ we have that $\sgn(c_{i}) = \sgn(c'_{i})$; and
\item for all $\{i,j\} \in E(\mathcal{C})$ we have that $\sgn(c_{ij}) = \sgn(c'_{ij})$.
\end{enumerate}
\end{proposition}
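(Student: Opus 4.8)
I would prove the two claims separately. Claim~1 (the unary weights) follows directly from sign-equivalence, using the same ``indicator'' assignments that appeared in the proof of Theorem~\ref{thm:valmin}. Claim~2 (the binary weights on $E(\mathcal{C})$) is where the \emph{trim} hypothesis does the work: it guarantees, for each edge $\{i,j\}$, an assignment at which flipping one of the two variables changes from improving to non-improving, and the ``size'' of that change is exactly $\pm c_{ij}$. For Claim~1, let $0^n$ be the all-zero assignment and $e_i$ the assignment with only variable $i$ set to $1$. Since $\mathcal{C}$ and $\mathcal{C}'$ are simple, $f(e_i) - f(0^n) = c_i$ and $f'(e_i) - f'(0^n) = c'_i$; and $0^n, e_i$ are adjacent, so which (if any) of $(0^n,e_i)$, $(e_i,0^n)$ is an edge is determined by $\sgn(c_i)$ in $G_\mathcal{C}$ and by $\sgn(c'_i)$ in $G_{\mathcal{C}'}$. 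Since $G_\mathcal{C} = G_{\mathcal{C}'}$, we get $\sgn(c_i) = \sgn(c'_i)$.

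\textbf{Claim~2.} Fix $\{i,j\} \in E(\mathcal{C})$. Because $\mathcal{C}$ is trim, $i$ and $j$ sign-interact in $G_\mathcal{C}$; without loss of generality $i$ sign-depends on $j$, so Definition~\ref{def:signDep} supplies an assignment $x \in \{0,1\}^n$ with $(x, x[i\mapsto\bar x_i]) \in E(G_\mathcal{C})$ but $(x[j\mapsto\bar x_j], x[i\mapsto\bar x_i, j\mapsto\bar x_j]) \notin E(G_\mathcal{C})$. Consider the quantity
\[
D(x) \;=\; \bigl[f(x[i\mapsto\bar x_i]) - f(x)\bigr] \;-\; \bigl[f(x[i\mapsto\bar x_i, j\mapsto\bar x_j]) - f(x[j\mapsto\bar x_j])\bigr].
\]
The first bracket is strictly positive (the first edge is present), and the second bracket is $\le 0$ (the endpoints of the second pair differ only in position $i$ and that edge is absent), so $D(x) > 0$. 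On the other hand, $D(x)$ is, up to sign, a discrete mixed second difference in the variables $x_i$ and $x_j$; applied to the degree-$\le 2$ multilinear polynomial implemented by a simple instance (Equation~\ref{eq:goodpoly}) it annihilates every monomial except $c_{ij}x_ix_j$, so a one-line finite check over the four values of $(x_i,x_j)$ gives $D(x) = (-1)^{x_i+x_j+1}c_{ij}$, whence $\sgn(c_{ij}) = (-1)^{x_i+x_j+1}$. Now repeat verbatim for $\mathcal{C}'$: since $G_{\mathcal{C}'} = G_\mathcal{C}$, the same $x$ satisfies the same edge-present/edge-absent conditions with $f'$ in place of $f$, so $D'(x) > 0$, and since $\mathcal{C}'$ is also simple, $D'(x) = (-1)^{x_i+x_j+1}c'_{ij}$. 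Hence $\sgn(c'_{ij}) = (-1)^{x_i+x_j+1} = \sgn(c_{ij})$. If instead $j$ sign-depends on $i$, swap the roles of $i$ and $j$.

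\textbf{Expected obstacle.} There is no deep obstacle here; the two potentially fiddly points are purely bookkeeping. The first is verifying that in $D(x)$ all contributions not involving the monomial $x_ix_j$ cancel and that the surviving term is $\pm c_{ij}$ with the stated parity — this is a finite check, but one must get the signs right. The second, and the only step that genuinely relies on the hypotheses, is the transfer: sign-equivalence preserves \emph{both} the presence of the first edge and the absence of the second, so one and the same witness $x$ works for $\mathcal{C}$ and $\mathcal{C}'$ simultaneously; this is exactly why trimness (which produces such a witness for every edge of $\mathcal{C}$) is required, and it is consistent with Proposition~\ref{prop:signNonZero}, which already guarantees $c'_{ij} \neq 0$ so that $\sgn(c'_{ij})$ is a well-defined nonzero sign.
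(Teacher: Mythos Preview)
Your proof is correct. Claim~1 is handled identically to the paper. For Claim~2, however, you take a genuinely different route.

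The paper argues by contradiction via a positive-combination trick: if $c_{ij}=a>0$ and $c'_{ij}=-b\le 0$, then the instance $b\mathcal{C}+a\mathcal{C}'$ is still sign-equivalent to $\mathcal{C}$ (positive combinations of instances with the same fitness graph preserve the fitness graph) but has weight $ba-ab=0$ on $\{i,j\}$, contradicting Theorem~\ref{thm:signmin}. Thus the paper leverages the minimality theorem it has just proved and never touches a specific witness assignment.

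Your argument is instead a direct computation: you invoke trimness to obtain a sign-dependence witness $x$, compute the mixed second difference $D(x)=(-1)^{x_i+x_j+1}c_{ij}$, read off $\sgn(c_{ij})$ from $D(x)>0$, and transfer to $\mathcal{C}'$ because the same $x$ witnesses the same edge pattern in $G_{\mathcal{C}'}=G_{\mathcal{C}}$. This is essentially a sharpening of the calculation in Proposition~\ref{prop:signNonZero}: you show not merely that the constraint is nonzero but that its sign is determined by the fitness graph alone. The advantage of your approach is that it is self-contained and does not appeal to Theorem~\ref{thm:signmin}; the advantage of the paper's is that it is a two-line algebraic argument once that theorem is in hand and avoids the parity bookkeeping you flag as the main fiddly point.
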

\begin{proof}
For (1), let $e_i \in \{0,1\}^n$ be the variable assignment that sets the $i$th variable to one, and all other variables to zero, 
and
let $f_\mathcal{C}$ be the fitness function implemented by $\mathcal{C}$.
Note that $c_i = f_\mathcal{C}(e_i) - f_\mathcal{C}(0^n)$ and $c'_i = f_{\mathcal{C}'}(e_i) - f_{\mathcal{C}'}(0^n)$.
Since $\mathcal{C}$ and $\mathcal{C}'$ are sign-equivalent, $\sgn(f_\mathcal{C}(e_i) - f_\mathcal{C}(0^n)) = \sgn(f_{\mathcal{C}'}(e_i) - f_{\mathcal{C}'}(0^n))$.
Hence, $\sgn(c_i) = \sgn(c'_i)$.

For (2), assume for contradiction that $c_{ij} = a$ and $c'_{ij} = -b$ for some positive integer $a$ and non-negative integer $b$ 
(repeat this argument with $-a$ and $b$ for the other case).
Note that the sum of two sign-equivalent VCSP instances is sign-equivalent to both, so $C'' = b\mathcal{C} + a\mathcal{C}'$ is sign-equivalent to $\mathcal{C}$.
But $c''_{ij} = ba - ab = 0$ so $\{i,j\} \not\in E(\mathcal{C}'')$.
Since $\mathcal{C}$ is trim, this contradicts Theorem~\ref{thm:signmin}.
Hence, $\sgn(c_{ij}) = \sgn(c'_{ij})$.
\end{proof}
However, unlike with magnitude-equivalence, it is NP-hard to determine a minimal sign-equivalent VCSP instance, as the next result shows:
\begin{theorem}
\label{thm:weakiscoNPcomplete}
Let $\mathcal{C}$ be a simple binary Boolean VCSP instance 
with associated fitness graph $G_\mathcal{C}$.
The problem of deciding whether $i,j$ sign-interact in $G_\mathcal{C}$ is NP-complete.
\end{theorem}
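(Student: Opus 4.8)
My plan is to show membership in NP by a short-witness argument and NP-hardness by a polynomial reduction from \textsc{Subset Sum}, the latter via a characterisation of sign-dependence that is special to \emph{simple} instances. For membership, I would use as a certificate that $i$ and $j$ sign-interact an assignment $x\in\{0,1\}^n$ together with one bit recording which of ``$i$ sign-depends on $j$'' or ``$j$ sign-depends on $i$'' it witnesses via Definition~\ref{def:signDep}. Verification amounts to evaluating the implemented fitness function $f$ at the four assignments $x$, $x[i\mapsto\bar x_i]$, $x[j\mapsto\bar x_j]$ and $x[i\mapsto\bar x_i,j\mapsto\bar x_j]$ --- each evaluation being a sum of $O(m)$ constraint values, hence polynomial in the instance-size --- and then checking the one strict and the one non-strict inequality required by the definition.

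\emph{A characterisation of sign-dependence.} For a simple binary Boolean instance $\mathcal C$, the quantity $f(z[i\mapsto1])-f(z[i\mapsto0])$ equals $c_i+\sum_{k\in N_{\mathcal C}(i)}c_{ik}z_k$ and so depends only on $\{z_k : k\in N_{\mathcal C}(i)\}$. Set $\alpha_i(y)=c_i+\sum_{k\in N_{\mathcal C}(i)\setminus\{j\}}c_{ik}y_k$ for $y\in\{0,1\}^{N_{\mathcal C}(i)\setminus\{j\}}$. The orientation of the hypercube edge $\{x,x[i\mapsto\bar x_i]\}$ in $G_{\mathcal C}$ is governed by the sign of $(-1)^{x_i}\bigl(f(x[i\mapsto1])-f(x[i\mapsto0])\bigr)$; as $x_i$ and $x_j$ vary with $x$ fixed elsewhere on $N_{\mathcal C}(i)$ at some $y$, the unordered pair of gains appearing in Definition~\ref{def:signDep} runs over $\{\alpha_i(y),\,\alpha_i(y)+c_{ij}\}$ while the sign $(-1)^{x_i}$ is free. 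Enumerating these finitely many possibilities (the ``$\Rightarrow$'' and ``$\Leftarrow$'' directions parallel Propositions~\ref{prop:strongBig} and~\ref{prop:magSignDep}, and the case $c_{ij}=0$ is covered by Proposition~\ref{prop:signNonZero}) I would prove
\begin{equation*}
\text{$i$ sign-depends on $j$}\iff c_{ij}\neq0\ \text{ and }\ \exists\, y:\ \alpha_i(y)\bigl(\alpha_i(y)+c_{ij}\bigr)\le 0 .
\end{equation*}
Informally: $0$ must lie weakly between the gain of flipping $i$ when $x_j=0$ and the gain when $x_j=1$.

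\emph{The reduction.} From a \textsc{Subset Sum} instance with positive integers $a_1,\dots,a_p$ and positive target $t$, I would build the simple binary Boolean VCSP instance $\mathcal C$ on variables $\{i,j,v_1,\dots,v_p\}$ with $C_i=\uC{0}{-2t}$, $C_j=\uC{0}{1}$, $C_{iv_k}=\bC{0}{0}{0}{2a_k}$ for every $k$, $C_{ij}=\bC{0}{0}{0}{1}$, and all remaining constraints zero; this is computable in time polynomial in the binary size of the \textsc{Subset Sum} instance. Here $N_{\mathcal C}(j)=\{i\}$, so $\alpha_j\equiv c_j=1$ and $\alpha_j(\alpha_j+c_{ij})=1\cdot2>0$: by the characterisation $j$ never sign-depends on $i$. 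And $N_{\mathcal C}(i)=\{j,v_1,\dots,v_p\}$ with $c_{ij}=1\neq0$, so writing $T=\{k:y_{v_k}=1\}$ and $s_T=\sum_{k\in T}a_k$ we get $\alpha_i(y)=-2t+2s_T=2(s_T-t)$; since $2(s_T-t)$ is even, $2(s_T-t)\bigl(2(s_T-t)+1\bigr)\le0$ holds exactly when $2(s_T-t)=0$, i.e.\ when $s_T=t$. Hence $i$ and $j$ sign-interact iff $i$ sign-depends on $j$ iff some subset of $a_1,\dots,a_p$ sums to $t$. Since \textsc{Subset Sum} is NP-complete, deciding sign-interaction is NP-hard, and with membership in NP it is NP-complete.

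\emph{Main obstacle.} I expect the delicate part to be establishing the characterisation lemma exactly --- getting the weak-versus-strict inequalities and the boundary cases ($c_{ij}=0$, and $\alpha_i(y)=0$ or $\alpha_i(y)+c_{ij}=0$) right so that the clean product condition $\alpha_i(y)(\alpha_i(y)+c_{ij})\le0$ emerges. The two remaining design points in the gadget are minor but necessary: scaling the $a_k$ and $t$ by $2$, so that the ``width-$c_{ij}$'' interval around $0$ isolates the single value $s_T=t$ rather than a window such as $\{t-1,t\}$; and choosing $c_j=1\notin\{-1,0\}$, so that $j$ cannot spuriously sign-depend on $i$.
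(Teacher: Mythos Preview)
Your proof is correct, and the overall architecture matches the paper's: NP-membership via a witnessing assignment, and NP-hardness via a \textsc{Subset Sum} reduction using a star-shaped constraint graph with one distinguished leaf playing the role of the second variable in the sign-interaction query. The paper's gadget scales the subset-sum inputs by $3$, sets the centre's unary weight to $-(3t+1)$, and puts weight $2$ on the distinguished leaf edge; yours scales by $2$, sets the centre weight to $-2t$, and uses leaf-edge weight $1$. Both scalings serve the same purpose of making the ``window'' around zero narrow enough that only exact subset-sums trigger sign-dependence.

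The genuine difference is your intermediate characterisation lemma: for simple instances, $i$ sign-depends on $j$ iff $c_{ij}\neq 0$ and some $y$ satisfies $\alpha_i(y)\bigl(\alpha_i(y)+c_{ij}\bigr)\le 0$. The paper does not isolate this; it instead computes the four fitness values $f(e_S00),f(e_S01),f(e_S10),f(e_S11)$ directly on its gadget and reads off the sign pattern case by case. Your lemma buys you a cleaner and more reusable argument --- once it is in hand, the gadget's correctness reduces to the one-line parity observation that $2(s_T-t)\bigl(2(s_T-t)+1\bigr)\le 0$ forces $s_T=t$ --- at the cost of having to establish the lemma carefully (as you note, the weak-versus-strict inequality bookkeeping in Definition~\ref{def:signDep} is where the work hides). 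The paper's direct computation is shorter for this single gadget but offers no leverage beyond it.
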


\begin{proof}
To show that this problem is in NP, we observe that
we can provide a variable assignment $x$ as a certificate and check that under that variable assignment either $i$ sign-depends on $j$ or $j$ sign-depends on $i$ (or both).

We will establish NP-hardness by reduction from the \textsc{SubsetSum} problem, which is known to be NP-complete~\cite{Garey1979}: A set of integers $\{s_1,\ldots,s_n\}$ and a target $t$ is a yes-instance of the \textsc{SubsetSum} problem if there exists some subset $S \subseteq [n]$ such that $\sum_{i \in S}s_i = t$.

Now consider a simple binary Boolean VCSP instance $\mathcal{C}$ on $n + 2$ variables,
that implements fitness function $f$ and has associated fitness graph $G_\mathcal{C}$,
whose constraint graph has the shape of a star, with variable ${n+2}$ at the centre
(see Figure~\ref{fig:subsetsumgadget}).

\begin{figure}[hbt]
\begin{center}
\includegraphics[width=0.77\textwidth]{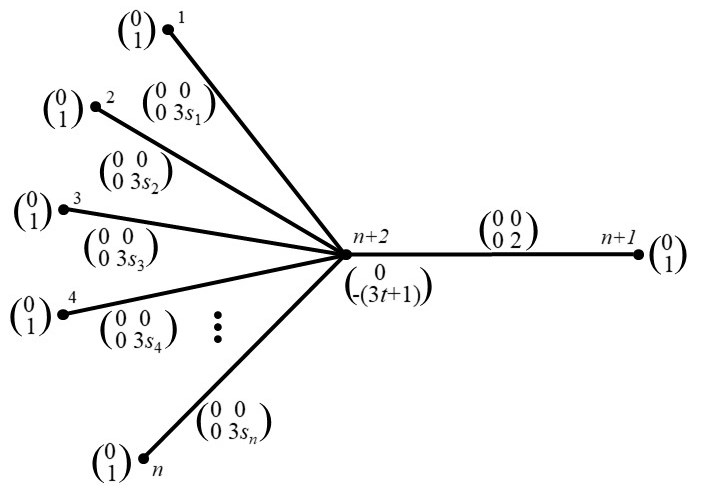}
\end{center}
\vspace{-0.3cm}
\caption{Binary VCSP instance used in the proof of Theorem~\protect\ref{thm:weakiscoNPcomplete}}
\label{fig:subsetsumgadget}
\end{figure}

\noindent The constraints of $\mathcal{C}$ are given by:
\begin{itemize}
\item unary constraints $C_i = \uC{0}{1}$ for all $i \leq n + 1$  and $C_{n + 2} = \uC{0}{-(3t + 1)}$; 
\item binary constraints $C_{i,n+2} = \bC{0}{0}{0}{3s_i}$ between the central variable ${n + 2}$ and variable $i$, for $1 \leq i \leq n$; and 
\item binary constraint $C_{n + 1, n + 2} = \bC{0}{0}{0}{2}$ between variables ${n+1}$ and ${n+2}$.
\end{itemize}

\noindent \textit{Claim:} $\langle \{s_1,\ldots,s_n\},t \rangle$
is a yes-instance of \textsc{SubsetSum} if and only if 
${n+1}$ and ${n+2}$ sign-interact.

We clearly have that for all $x \in \{0,1\}^{n + 2}$, $f(x[{n + 1} \mapsto 1]) > f(x[{n + 1} \mapsto 0])$, so ${n + 1}$ does not sign-depend on ${n + 2}$.
Thus our claim becomes equivalent to verifying the conditions under which 
${n + 2}$ sign-depends on ${n + 1}$.
Let's look at the two directions of the if and only if in the claim:
\begin{description}
\item[\textit{Case 1} ($\Rightarrow$):]
If $\langle \{s_1,\ldots,s_n\},t \rangle \in \textsc{SubsetSum}$, 
then there is a subset $S \subseteq [n]$ such that $\sum_{i \in S}s_i = t$.
Let $e_S \in \{0,1\}^n$ be the variable assignment such that for any $i \in S$, $e_S[i] = 1$ 
and for any $j \not\in S$, $e_S[j] = 0$.
We have that:
\begin{align*}
    f(e_S01) &=  |S| - 1 \quad     & f(e_S11) &=  |S| + 2\\  
    f(e_S00) &=  |S|     \quad & f(e_S10) &=  |S| + 1
\end{align*}
By Equation~\ref{eq:signdep2}, these imply that ${n + 2}$ sign-depends on ${n + 1}$.
\item[\textit{Case 2} ($\Leftarrow$):]
If $\langle \{s_1,\ldots,s_n\},t \rangle \not\in \textsc{SubsetSum}$, then
for any $S \subseteq [n]$ we either have $\sum_{i \in S}s_i \leq t - 1$ or $\sum_{i \in S}s_i \geq t + 1$. 
Thus, given an arbitrary assignment $e_S \in \{0,1\}$:
\begin{align*}
    &\text{If }\sum_{i \in S}s_i \leq t - 1 \text{ then:}\quad\quad
    && \text{Or, if } \sum_{i \in S}s_i \geq t + 1 \text{ then:} \\
    &\quad f(e_S01) - f(e_S00) \leq -4 &&\quad f(e_S01) - f(e_S00) \geq 2\\
    &\quad f(e_S11) - f(e_S10) \leq -2 &&\quad f(e_S11) - f(e_S10) \geq 4
\end{align*}
In either subcase, $\sgn(f(e_S01) - f(e_S00)) = \sgn(f(e_S11) - f(e_S10))$, 
so by Equation~\ref{eq:signdep2}, ${n + 2}$ does not sign-depend on ${n + 1}$.
\end{description}
\vspace{-0.5cm}
\end{proof}

\section{Span}
\label{sec:span}

In this section we show that a simple function of the numerical 
values of the constraints in a VCSP instance 
provides an upper bound on the length of the longest directed path 
in the associated fitness graph,
and hence a bound on the number of steps taken by any local search algorithm.

\begin{definition}
\label{def:span}
Given a VCSP instance $\mathcal{C}$ over domain $D^{[n]}$, define \[\Span(\mathcal{C}) = 
\sum_{C_S \in \mathcal{C}} \left( \max_{z \in D^{S}} C_{S}(z) - \min_{z \in D^{S}} C_{S}(z) \right).
\]
\end{definition}

\begin{proposition}
\label{prop:spanPath}
Given any VCSP instance $\mathcal{C}$,  
the length of the longest directed path in the associated fitness graph $G_{\mathcal{C}}$ is less than or equal to $\Span(\mathcal{C})$.
\end{proposition}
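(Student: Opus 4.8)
The plan is to exploit the fact that fitness strictly increases along any directed path in $G_{\mathcal{C}}$, combined with the integrality of $f$, to convert a length bound into a bound on the total range of $f$, which is then controlled by $\Span(\mathcal{C})$ term-by-term over the constraints.

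Concretely, I would first fix a directed path $x^0 \to x^1 \to \cdots \to x^L$ of length $L$ in $G_{\mathcal{C}}$. By Definition~\ref{def:fitnessgraph}, every edge $(x^{k-1}, x^k)$ satisfies $f(x^k) > f(x^{k-1})$, so the fitness values are strictly increasing along the path. Since $f$ takes values in $\mathbb{Z}$, each step increases fitness by at least one, and hence $f(x^L) - f(x^0) \geq L$. This is the step that ties the combinatorial quantity (path length) to the numerical quantity (fitness difference), and it is where the twin hypotheses — strict orientation of fitness-graph edges and integer-valued fitness — are both used.

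Next I would bound $f(x^L) - f(x^0)$ from above by $\Span(\mathcal{C})$. Using that $\mathcal{C}$ implements $f$, write $f(x^L) - f(x^0) = \sum_{C_S \in \mathcal{C}} \bigl( C_S(x^L[S]) - C_S(x^0[S]) \bigr)$, and bound each summand by $\max_{z \in D^S} C_S(z) - \min_{z \in D^S} C_S(z)$. Summing over all constraints gives exactly $\Span(\mathcal{C})$ by Definition~\ref{def:span}. Chaining the two inequalities yields $L \leq f(x^L) - f(x^0) \leq \Span(\mathcal{C})$, and since the path was arbitrary this bounds the length of the longest directed path.

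I do not anticipate a genuine obstacle here; the only things to be careful about are bookkeeping conventions — that ``length'' counts edges (so a path with $L+1$ vertices has length $L$), and that the bound on $f(x^L) - f(x^0)$ does not even require the endpoints to be the global min/max of $f$, merely that the per-constraint differences are each dominated by that constraint's range. One could optionally remark that this argument shows more generally that $\max_x f(x) - \min_x f(x) \leq \Span(\mathcal{C})$, of which the path bound is an immediate consequence, but that is a presentational choice rather than a necessary step.
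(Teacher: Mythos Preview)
Your proposal is correct and follows essentially the same argument as the paper: both use that integer-valued fitness forces each edge to increase $f$ by at least one, and both bound the total fitness range by summing the per-constraint max-minus-min to get $\Span(\mathcal{C})$. The paper phrases it by first bounding $\max_x f(x) - \min_x f(x) \leq \Span(\mathcal{C})$ and then counting steps, which is exactly the optional remark you mention at the end.
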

\begin{proof}
The maximum value of the fitness function $f$ implemented by $\mathcal{C}$ cannot exceed the sum of the largest fitness values assigned by each constraint.
Similarly, the minimum value of $f$ cannot be less than the sum of the smallest fitness values assigned by each constraint.
The difference between these bounds is precisely $\Span(\mathcal{C})$.
Since we have defined a VCSP instance, and hence the associated fitness function $f$, to be integer-valued, each directed edge in the fitness graph increases fitness by at least one, so there can be at most $\Span(\mathcal{C})$ many such steps in any path.
\end{proof}
Although the span of a VCSP instance always provides an upper bound on 
the length of the longest ascent in the associated fitness landscape,
in general this bound is not tight, as the following simple example shows: 
\begin{example}
\label{ex:smoothSpan}
Consider the unary VCSP instance $\mathcal{C} = \{ C_{i} = \begin{pmatrix} 0 \\ 2^i \end{pmatrix} |\; i = 1\ldots n \}$.

\noindent The longest directed path in the associated fitness graph $G_{\mathcal{C}}$ 
is of length $n$, but $\Span(\mathcal{C}) = 2^{n + 1} - 1$.
\end{example}
To avoid such a large discrepancy between the longest 
path in the fitness graph and the $\Span$,
and hence obtain a tighter bound, we can look for sign-equivalent VCSP instances that have a minimal span.
\begin{example}
\label{ex:smoothSpanMinimised}
In the case of Example~\ref{ex:smoothSpan}, a sign-equivalent minimal-span VCSP instance is given by $\mathcal{C}' = \{ C_{i} = \begin{pmatrix} 0 \\ 1 \end{pmatrix} |\; i = 1\ldots n\}$, which has $\Span(\mathcal{C}') = n$ -- giving us a tight bound on the length of the longest improving path.
\end{example}

Finding a sign-equivalent instance
with the smallest possible span may not always be straightforward.
However, for simple binary Boolean VCSP instances the span is just the sum of the
absolute values of the constraint weights.
Hence, for any binary Boolean VCSP instance we can compute the minimal span of a 
sign-equivalent simple binary Boolean instance by solving 
an integer linear program over the constraint weights.

Before describing this linear program, we show that restricting the search for sign-equivalent instances to
simple trim instances only changes the minimal span that can be obtained by a 
small constant factor.
\begin{theorem}
\label{thm:spanEq}
For any binary Boolean VCSP instance $\mathcal{C}'$,
there exists a sign-equivalent simple trim binary Boolean VCSP instance $\mathcal{C}$ such that 
$\Span(\mathcal{C}) \leq 4\;\Span(\mathcal{C}')$.
\end{theorem}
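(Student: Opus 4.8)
The plan is to take an arbitrary binary Boolean VCSP instance $\mathcal{C}'$, first normalise it and then trim it, tracking how the span changes at each step. First I would apply Theorem~\ref{thm:simplerep} to replace $\mathcal{C}'$ with a magnitude-equivalent (hence sign-equivalent) simple instance; the subtlety is that simplification can \emph{increase} the span. Looking at Equations~\ref{eq:simple_ci} and~\ref{eq:simple_cij}, each new weight $c_i$ is a sum of at most one unary difference and several binary differences, and each $c_{ij}$ is an alternating sum of four entries of the original $C'_{ij}$. So I would bound $|c_{ij}| \le 2\,\mathrm{range}(C'_{ij})$ (where $\mathrm{range}(C'_{ij}) = \max C'_{ij} - \min C'_{ij}$), since an alternating sum of four entries of a matrix is at most twice its range. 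For the unary weights, the contribution of each original binary constraint $C'_{ij}$ to $|c_i| + |c_j|$ is at most $2\,\mathrm{range}(C'_{ij})$ as well (the differences $C'_{ij}(1,0) - C'_{ij}(0,0)$ and $C'_{ij}(0,1) - C'_{ij}(0,0)$ are each bounded by the range), and the original unary constraint $C'_i$ contributes its own range. Summing the simple span $\sum_i |c_i| + \sum_{i<j} |c_{ij}|$ over all constraints, every original binary constraint is charged at most $2 + 2 = 4$ times its range and every original unary constraint at most once, so the simple instance has span at most $4\,\Span(\mathcal{C}')$. Then trimming (Theorem~\ref{thm:trimprep}) only \emph{removes} constraints, which can only decrease the span, so the resulting simple trim instance $\mathcal{C}$ satisfies $\Span(\mathcal{C}) \le 4\,\Span(\mathcal{C}')$.

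So the key steps, in order, are: (1) invoke Theorem~\ref{thm:simplerep} to get a magnitude-equivalent simple instance with explicit weight formulas~\ref{eq:simple_ci}--\ref{eq:simple_cij}; (2) observe that for a simple binary Boolean instance $\Span$ equals $\sum_i |c_i| + \sum_{i<j} |c_{ij}|$, as already noted in the text; (3) bound each $|c_{ij}|$ by $2\,\mathrm{range}(C'_{ij})$ using the four-term alternating sum; (4) bound the total unary contribution by charging each original binary constraint $\le 2\,\mathrm{range}(C'_{ij})$ and each original unary constraint $\le \mathrm{range}(C'_i)$; (5) add up to get the factor 4; (6) invoke Theorem~\ref{thm:trimprep} and note span is monotone under constraint removal. (One should also note the nullary constraint $C_\emptyset$ plays no role in the span.)

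The main obstacle is getting the constant right in step~(4): one has to be careful about how the binary-constraint differences are distributed between the two unary weights $c_i$ and $c_j$, and whether the bound should be stated in terms of $\mathrm{range}(C'_{ij})$ or in terms of $\max|C'_{ij}|$. Using the range (rather than absolute values of entries) is what keeps the factor at $4$ rather than something larger, and it matches the definition of $\Span$, so the bookkeeping needs to be phrased in terms of ranges throughout. A secondary point to handle cleanly is that after simplification some $c_{ij}$ may be zero, in which case that edge simply disappears and contributes nothing — this only helps. I expect the argument to otherwise be a routine triangle-inequality estimate once the charging scheme is fixed.
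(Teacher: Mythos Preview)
Your proposal is correct and follows essentially the same approach as the paper. The paper also first simplifies via Theorem~\ref{thm:simplerep}, bounds $|c_{ij}| \le 2\,\mathrm{range}(C'_{ij})$ from the four-term alternating sum, bounds the total unary contribution by charging each original binary constraint twice (once per endpoint) and each original unary constraint once, and then observes that trimming only deletes constraints and so cannot increase span; the bookkeeping and the resulting factor~$4$ are identical to yours.
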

We establish Theorem~\ref{thm:spanEq} by showing that if we start with
an arbitrary binary Boolean VCSP instance $\mathcal{C}'$ 
and transform it to a simple, trim, sign-equivalent
VCSP $\mathcal{C}$, as described in the previous sections, 
then that will increase the span by at most a factor of $4$.
This proceeds by two steps:
\begin{proposition}
\label{prop:simpleSpan}
Any binary Boolean VCSP instance $\mathcal{C}'$ can be transformed into
a simple VCSP instance $\mathcal{C}$ that is magnitude-equivalent to 
$\mathcal{C}'$ with 
$\mathrm{span}(\mathcal{C}) \leq 4\; \mathrm{span}(\mathcal{C}')$.
\end{proposition}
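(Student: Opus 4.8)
The plan is to reuse the explicit simplification map from Theorem~\ref{thm:simplerep} and bound, term by term, how much each kind of constraint in $\mathcal{C}'$ can contribute to $\Span(\mathcal{C})$. Recall that in that construction the new simple weights are obtained from Equations~\ref{eq:simple_ci} and~\ref{eq:simple_cij}: each $c_{ij}$ is an alternating sum of the four values of the old binary constraint $C'_{ij}$, and each $c_i$ collects the difference $C'_i(1)-C'_i(0)$ together with one term $C'_{ij}(1,0)-C'_{ij}(0,0)$ for each binary constraint incident to $i$. Also recall that for a simple binary Boolean instance the span is exactly $\sum_i |c_i| + \sum_{i<j} |c_{ij}|$, since each unary constraint $\uC{0}{c_i}$ contributes $|c_i|$ and each binary constraint $\bC{0}{0}{0}{c_{ij}}$ contributes $|c_{ij}|$.

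First I would bound $\sum_{i<j}|c_{ij}|$. For each binary constraint $C'_{ij}$, write $r_{ij} := \max_{a,b} C'_{ij}(a,b) - \min_{a,b} C'_{ij}(a,b)$ for its contribution to $\Span(\mathcal{C}')$. By the triangle inequality applied to Equation~\ref{eq:simple_cij}, $|c_{ij}|$ is at most the sum of the two largest minus the two smallest of the four values, hence $|c_{ij}| \le 2 r_{ij}$; in fact a cleaner bound is $|c_{ij}| \le 2 r_{ij}$ (any alternating $\pm$ sum of four numbers lying in an interval of length $r_{ij}$ has absolute value at most $2r_{ij}$). Second, I would bound $\sum_i |c_i|$. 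From Equation~\ref{eq:simple_ci}, $|c_i| \le |C'_i(1)-C'_i(0)| + \sum_{j:\{i,j\}\in E(\mathcal{C}')} |C'_{ij}(1,0)-C'_{ij}(0,0)|$, and each term $|C'_i(1)-C'_i(0)|$ is at most the span contribution of $C'_i$, while each term $|C'_{ij}(1,0)-C'_{ij}(0,0)| \le r_{ij}$. Summing over $i$, each binary constraint $C'_{ij}$ is counted for two endpoints $i$ and $j$, so $\sum_i |c_i| \le \Span_{\text{unary}}(\mathcal{C}') + 2\sum_{i<j} r_{ij}$, where $\Span_{\text{unary}}(\mathcal{C}')$ is the part of $\Span(\mathcal{C}')$ coming from unary (and we may ignore the nullary constraint, which has zero span). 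Adding the two estimates gives $\Span(\mathcal{C}) = \sum_i|c_i| + \sum_{i<j}|c_{ij}| \le \Span_{\text{unary}}(\mathcal{C}') + 4\sum_{i<j} r_{ij} \le 4\,\Span(\mathcal{C}')$, as required.

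The main thing to be careful about — and the step where the constant $4$ is actually pinned down — is the interaction between the unary bound and the binary bound: each binary constraint of $\mathcal{C}'$ is "charged" once (with factor $2$) to $\sum|c_{ij}|$ and once (with factor $2$, namely factor $1$ at each of its two endpoints) to $\sum|c_i|$, giving the total factor $4$ on the binary part, while the unary part only picks up factor $1$. So the bottleneck constant comes entirely from binary constraints, and one should double-check that the per-constraint inequalities $|c_{ij}|\le 2r_{ij}$ and $|C'_{ij}(1,0)-C'_{ij}(0,0)|\le r_{ij}$ are the tight ones (they are: $r_{ij}$ is by definition the full range of the four entries). If a sharper constant is wanted one could optimize these estimates, but $4$ suffices for the statement, and it composes with Theorem~\ref{thm:trimprep} (trimming only deletes constraints, so it cannot increase the span) to yield Theorem~\ref{thm:spanEq}.
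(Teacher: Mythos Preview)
Your proposal is correct and follows essentially the same route as the paper: both arguments apply the triangle inequality to the explicit formulas of Equations~\ref{eq:simple_ci} and~\ref{eq:simple_cij}, bound $|c_{ij}|\le 2r_{ij}$ and $|c_i|\le \Span(\{C'_i\})+\sum_j r_{ij}$, then sum and use the double-counting of edges to land on the factor~$4$. The only cosmetic difference is in the final bookkeeping---the paper bounds $\Span_1(\mathcal{C})\le 2\Span(\mathcal{C}')$ and $\Span_2(\mathcal{C})\le 2\Span_2(\mathcal{C}')$ separately and then adds, whereas you combine first to get $\Span_{\text{unary}}(\mathcal{C}')+4\sum r_{ij}\le 4\Span(\mathcal{C}')$---but the estimates and the mechanism are identical.
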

\begin{proof}
Let us decompose the span of $\mathcal{C}$ 
into the contribution due to unary constraints ($\Span_1$) and binary constraints ($\Span_2$):
\begin{equation*}
    \Span(\mathcal{C}) = \overbrace{\sum_{i\in[n]}|c_i|}^{\Span_1(\mathcal{C})} + 
    \overbrace{\sum_{\{i,j\} \in E(\mathcal{C})}|c_{ij}|}^{\Span_2(\mathcal{C})}
\end{equation*}
Using Equation~\ref{eq:simple_ci} from the proof of Theorem~\ref{thm:simplerep}, we can express $|c_i|$ in terms of $\mathcal{C}'$ as:
\begin{align}
    |c_i| & = |\;C'_{i}(1) - C'_{i}(0) \quad + \sum_{j \; | \; \{i,j\} \in E(\mathcal{C'})}C'_{ij}(1,0) - C'_{ij}(0,0])\;| \nonumber \\
    & \leq |\;C'_{i}(1) - C'_{i}(0)\;| \quad + \sum_{j \; | \; \{i,j\} \in E(\mathcal{C}')} |\; C'_{ij}(1,0) - C'_{ij}(0,0)\;| \nonumber \\
     & \leq \Span(\{C'_i\}) + \Span(\{C'_{ij} \; | \; j \in N_{\mathcal{C}'}(i) \}) \label{eq:2spans}
\end{align}
Notice that the first span in Equation~\ref{eq:2spans} is of the unary constraint in $\mathcal{C}'$ that has $i$ as its scope, and the second span is of all binary constraints in $\mathcal{C}'$ that have $i$ in scope (or, equivalently: span of all edges incident on $i$ in the constraint graph of $\mathcal{C}'$).
This means that if we sum $|c_i|$ over all $i \in [n]$ then we cover the whole graph:
\begin{align}
    {\textstyle \Span_1}(\mathcal{C}) = \sum_{i = 1}^n |c_i| & \leq \sum_{i = 1}^n \Span(\{C'_i\}) + \sum_{i = 1}^n \Span(\{C'_{ij} \; | \; j \in N_{\mathcal{C}'}(i) \}) \nonumber \\
    & = {\textstyle \Span_1}(\mathcal{C}') + 2\;{\textstyle \Span_2}(\mathcal{C}') \label{eq:doublecover}\\
    & \leq 2\;\Span(\mathcal{C}') \label{eq:span1C}
\end{align}
where Equation~\ref{eq:doublecover} has a double cover in its second summand because each edge in the constraint graph of $\mathcal{C}'$ has two end points (equivalently: all scopes are binary).

Similarly, using Equation~\ref{eq:simple_cij} from the proof of Theorem~\ref{thm:simplerep}, we can also express $|c_{ij}|$ in terms of $\mathcal{C}'$ as:
\begin{align*}
    |c_{ij}| & = |C\;'_{ij}(0,0) - C'_{ij}(0,1) - C'_{ij}(1,0) + C'_{ij}(1,1)\;| \\
    & \leq  |\;C'_{ij}(1,1) - C'_{\{i,j\}}(1,0)\;| + |\;C'_{ij}(0,0) - C'_{ij}(0,1)\;| \\
    & \leq 2\;\Span(\{C'_{ij}\}) 
\end{align*}
As before, if we sum $|c_{ij}|$ over all $\{i,j\} \in E(\mathcal{C})$ then we cover the whole graph:
\begin{equation}
    {\textstyle \Span_2}(\mathcal{C}) = \sum_{\{i,j\} \in E(\mathcal{C})}|c_{ij}| 
    \leq 2\sum_{\{i,j\} \in E(\mathcal{C}')}\Span(\{C'_{ij}\}) 
    \leq 2\;{\textstyle \Span_2}(\mathcal{C}') \label{eq:span2C}
\end{equation}
where for the last inequality we moved from summing over $\{i,j\} \in E(\mathcal{C})$ to $\{i,j\} \in E(\mathcal{C}')$ because $ E(\mathcal{C}) \subseteq E(\mathcal{C}')$ by Theorem~\ref{thm:valmin}.
Combining Equations~\ref{eq:span1C} and \ref{eq:span2C}, we get the final result that $\Span(\mathcal{C}) = \Span_1(\mathcal{C}) + \Span_2(\mathcal{C}) \leq 4\;\Span(\mathcal{C}')$.
\end{proof}
Note that Proposition~\ref{prop:simpleSpan} is the best possible, 
as the following example shows:
\begin{example}
The two VCSP instances shown in Figure~\ref{fig:magequiv2} are magnitude-equivalent.
\begin{figure}[htb]
\begin{center}
\includegraphics[width = 0.5\textwidth]{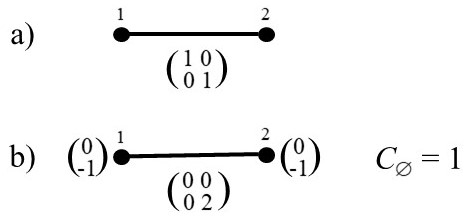}
\end{center}
\caption{VCSP instances with differing span that implement the same fitness function} 
\label{fig:magequiv2}
\end{figure}
\noindent The instance in Figure~\ref{fig:magequiv2}a has a span of $1$ and the simple instance in Figure~\ref{fig:magequiv2}b has a span of $1 + 2 + 1 = 4$.
Thus, sometimes the simplifying procedure from Section~\ref{sec:mageq} can increase span by a factor of $4$.
\end{example}

In contrast, the trimming procedure from Section~\ref{sec:signeq} can only decrease span:
\begin{proposition}
For any simple binary Boolean VCSP instance $\mathcal{C}'$,
there exists a sign-equivalent simple trim binary Boolean VCSP instance $\mathcal{C}$ such that $\Span(\mathcal{C}) \leq \Span(\mathcal{C}')$
\label{prop:trimSpan}
\end{proposition}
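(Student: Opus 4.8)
The plan is to reuse verbatim the trimming construction from the proof of Theorem~\ref{thm:trimprep} and then simply read off what it does to the span. So I would let $\mathcal{C}$ be the instance obtained from $\mathcal{C}'$ by deleting every binary constraint $C'_{ij}$ for which $i$ and $j$ are sign-independent in $G_{\mathcal{C}'}$. Theorem~\ref{thm:trimprep} already tells us that this $\mathcal{C}$ is trim and sign-equivalent to $\mathcal{C}'$; and it is still simple, since deleting a constraint from a simple instance leaves all the remaining constraints in the prescribed forms $\uC{0}{c_i}$ and $\bC{0}{0}{0}{c_{ij}}$. Hence the only thing left to verify is the span inequality.

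For that I would invoke the observation (noted just before Theorem~\ref{thm:spanEq}) that the span of a simple binary Boolean VCSP instance equals the sum of the absolute values of its constraint weights, so that $\Span(\mathcal{C}') = \sum_{i \in [n]} |c'_i| + \sum_{\{i,j\} \in E(\mathcal{C}')} |c'_{ij}|$ and likewise for $\mathcal{C}$ (the nullary constant contributes $0$ to the span, since it takes a single value). The trimming step changes neither the nullary constraint nor any unary constraint, so $c_i = c'_i$ for all $i$; and it only removes binary constraints, so $E(\mathcal{C}) \subseteq E(\mathcal{C}')$ with $c_{ij} = c'_{ij}$ on the surviving edges. Therefore $\Span(\mathcal{C})$ is just $\Span(\mathcal{C}')$ with the nonnegative quantities $|c'_{ij}|$ for the deleted edges subtracted off, whence $\Span(\mathcal{C}) \leq \Span(\mathcal{C}')$.

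There is really no obstacle to overcome here: the proof is a one-line bookkeeping argument once Theorem~\ref{thm:trimprep} is in hand. The only point worth spelling out is the one already mentioned --- that the trimming operation is purely subtractive on the weight list --- which is exactly why the span can only go down (and stays equal precisely when $\mathcal{C}'$ was already trim).
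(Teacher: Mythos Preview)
Your proposal is correct and matches the paper's own argument exactly: the paper's proof is the single sentence ``The trimming procedure from Section~\ref{sec:signeq} removes constraints but doesn't change any remaining ones,'' which is precisely the subtractive-bookkeeping observation you spell out in more detail.
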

\begin{proof}
The trimming procedure from Section~\ref{sec:signeq} removes constraints but doesn't change any remaining ones.
\end{proof}
Combining Propositions~\ref{prop:simpleSpan} and \ref{prop:trimSpan}, 
establishes Theorem~\ref{thm:spanEq}.

Theorem~\ref{thm:spanEq} implies that, in the binary Boolean case,
restricting our search for a minimal-span sign-equivalent instance 
to simple trim instances will only increase the span obtained 
by a factor of at most four.
Hence, given any binary Boolean VCSP instance $\mathcal{C}$, 
we will formulate the problem of finding a minimal-span
sign-equivalent instance as an optimisation problem 
over the weights of a simple binary Boolean instance on 
the same constraint graph. 
This will considerably simplify the search procedure, 
compared with searching for an arbitrary minimal-span sign-equivalent instance.

Given a simple trim binary Boolean VCSP instance, $\mathcal{C}$, 
it follows from Proposition~\ref{prop:samesign} that 
any sign-equivalent simple instance must 
preserve the signs of each of the constraint weights.
Moreover, the values of these weights must satisfy a collection of linear inequalities,
to ensure that the edges of the fitness graph are preserved,
as indicated in Equation~\ref{eq:cij<0}.
Each of these inequalities constrains the weights of 
constraints with scopes including a particular variable.
We therefore have the optimisation problem given in Definition~\ref{def:liftedCSP},
which involves linear inequality and equality constraints of the following forms:
\begin{definition}
\label{def:metaC}
Given two sets of variables $L$ and $R$, let $\leq_{+k}[L,R]$ be a constraint of arity $|L| + |R|$ that is satisfied when
$k + \sum_{x \in L} x \leq \sum_{y \in R} y$.
Call $L$ the \emph{left side} of the constraint and $R$ the \emph{right side}.
Similarly, define $=_{+k}[L,R]$ as above but with $\leq$ replaced by $=$.
\end{definition}

Using these two constraint types, we can now define the 
span-minimisation problem for binary Boolean VCSP instances as follows:
\begin{definition}
\label{def:liftedCSP}
Given a simple binary Boolean VCSP instance $\mathcal{C}$ on $n$ 
variables (with a constraint graph that has neighbourhood function $N: [n] \rightarrow 2^{[n]}$), the corresponding 
\textbf{span-minimisation problem} for $\mathcal{C}$ has  
a set of variables $V$, each with domain $\mathbb{N}$, where:
\[
V = \{p_i \mid C_{i} \in \mathcal{C}\} \cup 
\{p_{\{i,j\}} \mid C_{ij} \in \mathcal{C}\}.
\]
Divide the set $V$ into two sets $V_+,V_-$ with $p_i \in V_+$ or $p_{\{i,j\}} \in V_+$ if $c_i > 0$ or $c_{ij} > 0$ and otherwise $p_i \in V_-$ or $p_{\{i,j\}} \in V_-$ if $c_i < 0$ or $c_{ij} < 0$.

For each variable $i \in [n]$ of $\mathcal{C}$, introduce $2^{|N(i)|}$  linear constraints, one for each $Y \subseteq \{p_{\{i,j\}} \mid j \in N(i)\}$, 
depending on the sign of $s = c_i + \sum_{j \in Y}c_{ij}$:
\begin{itemize}
    \item If $s < 0$ then add the constraint $\leq_{+1}[(Y\cup\{p_i\})\cap V_+,(Y\cup\{p_i\})\cap V_-]$, else
    \item If $s = 0$ then add the constraint $=_{+0}[(Y\cup\{p_i\})\cap V_+,(Y\cup\{p_i\})\cap V_-]$, else
    \item If $s > 0$ then add the constraint $\leq_{+1}[(Y\cup\{p_i\})\cap V_-,(Y\cup\{p_i\})\cap V_+]$.
\end{itemize}
A \emph{feasible solution} to this span-minimisation problem is 
an assignment of values to the variables in $V$ that satisfies all the constraints. A feasible solution is \emph{optimal} if it
minimises the sum of the assigned values.
\end{definition}
Note that there is at least one feasible solution to 
the span-minimisation problem for $\mathcal{C}$  
given by $p_i = |c_i|$ and $p_{\{i,j\}} = |c_{ij}|$ (i.e., the absolute values of the weights of the original simple VCSP constraints).
\begin{example}{\bf (Unary instances have linear minimal span)}
For any \emph{unary} Boolean VCSP instance with $n$ variables, the span-minimisation problem 
only has inequalities of the form $1 \leq p_i$.
Hence an optimal 
solution sets each $p_i$ to $1$, so the corresponding 
minimal-span sign-equivalent instance has 
each $c_i$ equal to $1$ or $-1$, and so has span $n$. 
This is precisely the length of the longest improving path, as illustrated in Example~\ref{ex:smoothSpanMinimised}.
\end{example}
\begin{example}{\bf (Degree 2 binary instances have quadratic minimal span)}
\label{ex:cycle}
Consider the binary Boolean VCSP instance on $8$ variables
shown in Figure~\ref{fig:CycleSpanExample}, where the constraint graph is 
a cycle.
The span of this instance is quite large, but if we convert to a simple instance, and then solve the span-minimisation problem, we obtain a sign-equivalent instance with a much smaller span,
as shown in Figure~\ref{fig:CycleSpanExampleMinimised}.

\begin{figure}[htb]
    \centering
    \begin{subfigure}[b]{0.49\textwidth}
    \includegraphics[width=\textwidth]{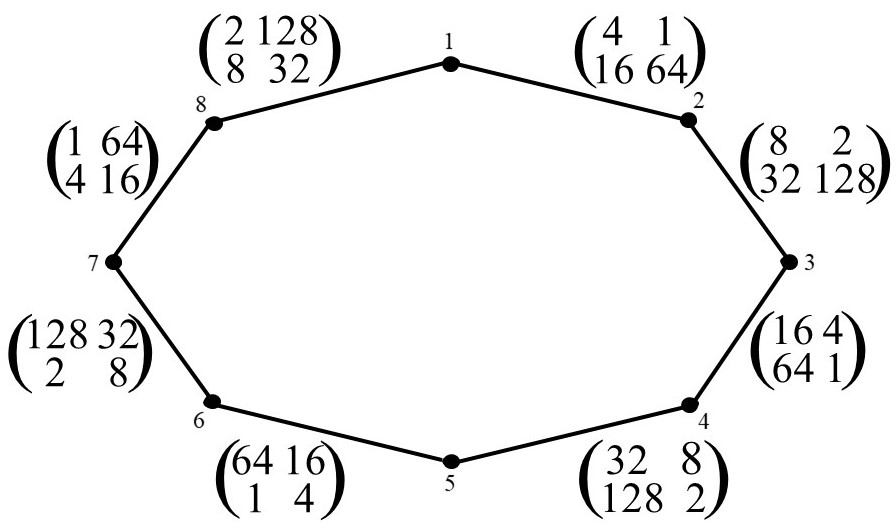}
    \caption{Original with span of 756}
    \label{fig:CycleSpanExample}
    \end{subfigure} \hfill
    \begin{subfigure}[b]{0.49\textwidth}
    \centering
    \includegraphics[width=0.92\textwidth]{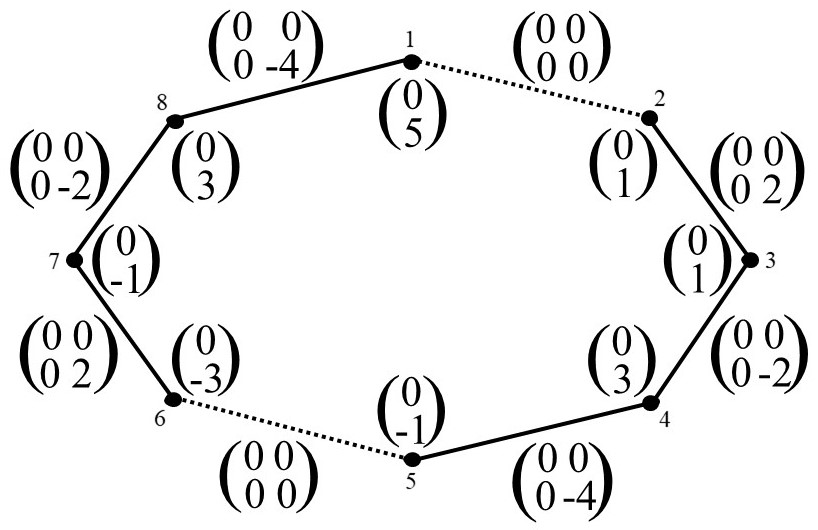}
    \caption{Minimised span of 34}
    \label{fig:CycleSpanExampleMinimised}
    \end{subfigure}
    \caption{Two sign-equivalent binary Boolean VCSP instances of different span}
\end{figure}

In general, by analysing the structure of the inequalities in Definition~\ref{def:liftedCSP},
it can be shown that for any binary Boolean VCSP instance whose constraint graph has maximum degree 2
(e.g., a cycle) 
the weights of a minimal-span sign-equivalent simple instance can grow only linearly with the number of variables, and
hence the span and longest improving path in the associated fitness graph can grow only quadratically at most.
For a full analysis, see Section 5.1.1 of \citeauthor{KazThesis}~\citeyear{KazThesis}.
\end{example}

Unfortunately, span arguments like the one above cannot be used to obtain
tight bounds for all binary Boolean VCSP instances,
as we show with the following example of a VCSP with only short 
improving paths but exponential minimal span.

\begin{example}{\bf (Large span in tree-structured constraint graph)}
\label{ex:bigSpanTree}
Consider the family of binary Boolean VCSP instances on $3K+2$ variables 
illustrated in Figure~\ref{fig:bigSpantree},
where the constraint graph of each instance is a tree.
\begin{figure}[phtb]
\begin{center}
\vspace{0.3cm}
\includegraphics[width=0.9\textwidth,height=4.1cm]{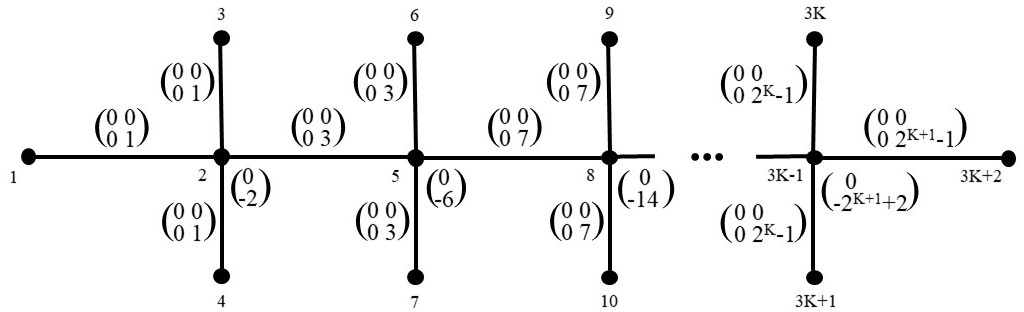}
\end{center}
\vspace{-0.4cm}
\caption{A family of tree-structured simple binary Boolean VCSP instances}
\label{fig:bigSpantree}
\end{figure}
Solving the span-minimisation problem shows that the weight values cannot be reduced any
further without changing the fitness graph, hence all such instances are span-minimal. 

The span of each such instance is $3(2^{K+2}-5K-9)$, 
and hence grows exponentially with the number of variables.
However, the longest improving path in the associated fitness graph grows only quadratically with the number of variables (Theorem~\ref{thm:main}).
\end{example}

In the next section, we will show that
for any binary Boolean VCSP instance whose constraint graph is a tree (like Example~\ref{ex:bigSpanTree}),
the longest improving path in the associated fitness graph 
is bounded by a quadratic function of the number of variables.
This analysis will require us to develop more sophisticated techniques than simply computing the span.

\section{Tree-Structured Boolean VCSP Instances}
\label{sec:tree-structured}

In this section, we will prove the following:
\begin{theorem}
\label{thm:main}
For any binary Boolean VCSP instance $\mathcal{C}$ on $n$ variables, if the constraint-graph of $\mathcal{C}$ is a tree, then any directed path in the associated fitness graph $G_\mathcal{C}$ 
has length at most ${n \choose 2} + n$.
\end{theorem}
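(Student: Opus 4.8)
The plan is to prove this by induction on the number of variables $n$, exploiting the tree structure to find a "leaf" variable whose behaviour can be controlled. Since the constraint graph is a tree, there is a variable $\ell$ that is a leaf, connected to exactly one other variable $p$ (its parent) via a single binary constraint $C_{\ell p}$, plus its own unary constraint $C_\ell$. The key structural fact I would extract is that, once we fix the value of $x_p$, the variable $x_\ell$ behaves like a pure unary (degree-0) variable: whether a flip of $x_\ell$ is improving depends only on $x_p$, not on the rest of the assignment. In the simple normal form (which we may assume by Theorem~\ref{thm:simplerep}), flipping $x_\ell$ from $0$ to $1$ changes fitness by $c_\ell + x_p c_{\ell p}$, a quantity determined entirely by $x_p \in \{0,1\}$.

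**Next I would** track, along any directed path $P = (v^0, v^1, \ldots, v^L)$ in the fitness graph, how often the leaf coordinate $\ell$ can flip. Because the improving-direction for $\ell$ is a function of $x_p$ alone, the coordinate $\ell$ can only reverse its flipping direction when $x_p$ changes. So I would partition the path into maximal segments between consecutive flips of the parent coordinate $p$. Within each such segment, $x_p$ is constant, so $x_\ell$ can flip at most once (it moves monotonically toward its preferred value and then cannot move again). This bounds the number of $\ell$-flips by roughly one more than the number of $p$-flips. To make this into an induction, I would consider the projection of $P$ onto the remaining $n-1$ coordinates: deleting the $\ell$-flip steps yields a walk on $\{0,1\}^{[n]\setminus\{\ell\}}$, but one must check it is still a directed path in the fitness graph of the restricted instance $\mathcal{C}'$ obtained by deleting variable $\ell$ (absorbing $C_\ell$ and $C_{\ell p}$'s effect). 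The cleanest bookkeeping is: $L \le (\text{number of non-}\ell\text{ steps}) + (\text{number of }\ell\text{-steps}) \le (\text{length of path in } G_{\mathcal{C}'}) + (1 + \text{number of }p\text{-flips in } P)$, and the number of $p$-flips is at most the length of that shorter path.

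**The main obstacle** I expect is getting the induction to close with exactly the bound $\binom{n}{2} + n$. If $L_{n-1} = \binom{n-1}{2} + (n-1)$ bounds paths in the $(n-1)$-variable tree instance, then the recurrence $L_n \le L_{n-1} + L_{n-1} + 1$ is far too weak — it gives exponential growth. So the naive "$\ell$ flips at most once per parent-flip" accounting must be sharpened: one needs that the number of $\ell$-flips is bounded by the number of times $x_p$ takes a *new* configuration relevant to $\ell$, or better, one should not charge each $\ell$-flip to the whole remaining path but show the $\ell$-flips are interleaved in a way that adds only $O(n)$ total. A more promising route is to prove directly that in a tree instance, each coordinate $i$ flips at most $d_i + 1$ times or so — or, summing the right way, that the total number of flips of coordinate $i$ over the whole path is at most the number of distinct ancestors/neighbours plus one — so that $\sum_i (\text{flips of } i) \le \sum_i (\deg(i)) + n = 2(n-1) + n$; but that is linear, not quadratic, so the true statement must allow each variable to flip up to about $i$ times under a suitable ordering. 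I therefore anticipate the real argument orders the variables as a sequence of leaf-removals $\ell_1, \ell_2, \ldots$ and shows coordinate $\ell_k$ flips at most $k$ times (or at most the number of already-removed neighbours plus one), giving $\sum_{k=1}^n k = \binom{n}{2} + n$ after reindexing. Pinning down precisely why a coordinate at removal-stage $k$ can flip at most $k$ times — presumably because its improving direction is a monotone Boolean-like function of the coordinates removed before it, which themselves flip boundedly — is the crux, and I would spend most of the effort there, likely via a potential/monovariant argument on the restricted fitness function after each leaf deletion.
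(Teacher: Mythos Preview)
Your leaf-removal induction has a genuine gap, and it is precisely the step you flagged but did not resolve: the projection to $n-1$ variables does \emph{not} give a directed path in the fitness graph of any single restricted instance. The gain of a $p$-flip in the original instance depends on the current value of $x_\ell$ (through the term $x_\ell\,c_{\ell p}$), so after deleting the $\ell$-steps, a step that was improving in $\mathcal{C}$ may fail to be improving in the instance obtained by deleting $\ell$, and there is no fixed way to ``absorb'' $C_{\ell p}$ into a unary on $p$ that works for all segments of the path simultaneously. Without a valid projected path, you have no inductive hypothesis to invoke, and you are left only with the observation $f_\ell \le f_p + 1$, which as you correctly note yields at best $L_n \le 2L_{n-1}+1$. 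Your final speculation (that the $k$-th removed leaf flips at most $k$ times) is in fact false in general: in a star, the chain $f_\ell \le f_p + 1$ does not propagate further, because the centre $p$ has many neighbours and its gain changes with every leaf flip, so nothing in your framework bounds $f_p$ by a small number.

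The paper's argument is not inductive at all and uses a genuinely different mechanism. It introduces an \emph{encouragement} relation on the flips themselves: a flip $m(t')=(j\mapsto c)$ encourages $m(t)=(i\mapsto b)$ if it is the most recent earlier flip that pushed the gain at position $i$ from non-positive to positive (and the value $c$ is still in place at time $t$). This relation is shown to be a forest on the set of flips, with at most $n$ roots (``courageous'' flips, one per variable, Proposition~\ref{prop:CourStart}). The crucial lemmas then show that each root-to-node path in this encouragement forest projects to a \emph{simple path in the constraint tree} (no doubling back, Proposition~\ref{prop:noReverse}), that two encouragement paths with the same sequence of positions coincide (Proposition~\ref{prop:noTime}), and that no two such paths traverse the same positions in opposite order (Proposition~\ref{prop:noOpposite}). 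Hence every non-courageous flip is tagged by a distinct \emph{unordered} pair of vertices in the tree, of which there are $\binom{n}{2}$, and the $n$ courageous flips account for the remaining $n$. The whole proof is a global injection from flips into paths of the constraint graph, with no appeal to a smaller instance.
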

Note that this result bounds the length of \emph{any} directed path in $G_\mathcal{C}$, 
not just the path taken by a particular local-search algorithm.
Thus, on such landscapes even choosing the worst possible sequence of improving moves 
results in a local optimum being found in polynomial time.

We will show in Section~\ref{sec:robustness} that 
being Boolean and tree-structured are both essential conditions 
to obtain a polynomial bound on the length of all improving paths
in the associated fitness graph.

To see that the bound in Theorem~\ref{thm:main} is the best possible for binary Boolean tree-structured VCSP instances, consider the path-structured VCSP instance on $n$ variables 
described in Example~\ref{ex:quadpath}.
\begin{example}{\bf (Path of length ${n \choose 2} + n$)}
\label{ex:quadpath}
Consider the binary Boolean VCSP instance on $n$ variables
illustrated in Figure~\ref{fig:quadpath},
where the constraint graph is a path, and the constraint on each edge $\{i,i+1\}$
is $\bC{i}{0}{0}{i}$.
\begin{figure}[htb]
\begin{center}
\includegraphics[width=0.9\textwidth]{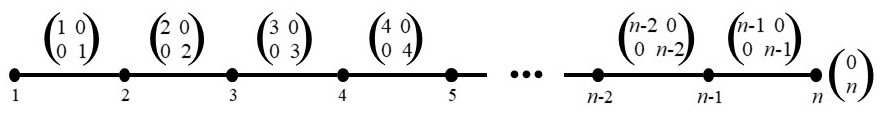}
\end{center}
\vspace{-0.5cm}
\caption{A binary Boolean VCSP instance with an improving path of length ${n \choose 2} + n$ 
in the associated fitness graph}
\label{fig:quadpath}
\end{figure}

To obtain an improving path of length ${n \choose 2} + n$ in the corresponding fitness graph, 
consider an initial variable assignment of $x = (10)^\frac{n}{2}$ if $n$ is even and $x = (10)^\frac{n - 1}{2}1$ if $n$ is odd, and always select the leftmost variable that is able to flip.
This will increase the fitness by $1$ at each step, starting from $0$ to $\frac{n(n + 1)}{2}$.

For example, when $n=4$, this gives the following sequence of eleven assignments, each of which 
increases the value of the fitness function by one: 

\begin{equation}
\small{1010 \!\rightarrow\! 0010 \!\rightarrow\! 0110 \!\rightarrow\! 1110 \!\rightarrow\! 1100 \!\rightarrow\! 1000
\!\rightarrow\! 0000 \!\rightarrow\! 0001 \!\rightarrow\! 0011 \!\rightarrow\! 0111 \!\rightarrow\! 1111}
\label{eq:expath}
\end{equation}
\end{example}

For the proof of Theorem~\ref{thm:main}, we introduce some further definitions.
\begin{definition}
\label{def:flip}
Given any directed path $p = x^1\ldots x^t\ldots x^T$ in a 
fitness graph $G$, where each $x^t$ is a Boolean assignment, define the \textbf{flip function} $m$ at time $t$ as follows: $m(t) = (i \mapsto b)$ 
where $x^{t + 1} \oplus x^t = e_i$ and $b = x^{t + 1}_i$ 
(i.e., the $i$-th variable is flipped at time $t$ to value $b$).
\end{definition}
For illustration, consider the sequence of moves listed in Equation~\ref{eq:expath} of Example~\ref{ex:quadpath}.
It corresponds to the following flip function:
\begin{center}
\begin{tabular}{c||c|c|c|c|c|c|c|c|c|c|}
     \textbf{t} & $1$ & $2$ & $3$ & $4$ & $5$ & $6$ & $7$ & $8$ & $9$ & $10$\\
     \hline
     \textbf{m(t)} & $1 \mapsto 0$ & $2 \mapsto 1$ & $1 \mapsto 1$ & $3 \mapsto 0$ & $2 \mapsto 0$ & $1 \mapsto 0$ & $4 \mapsto 1$ & $3 \mapsto 1$ & $2 \mapsto 1$ & $1 \mapsto 1$
\end{tabular}
\end{center}
To obtain the bound on the length of all paths in a fitness graph given by Theorem~\ref{thm:main}, we will identify conditions on any possible flip function, and hence bound the maximum possible value for $T$.
\begin{definition}
\label{def:gain}
Given any fitness function $f$ on Boolean assignments, 
define the \textbf{gain function} for value $b$ in position $i$ of
an assignment $x$ as follows:
$$\gain(x,i,b) = f(x[i \mapsto b]) - f(x[i \mapsto \bar{b}]).$$
\end{definition}
\begin{definition}
\label{def:support}
We say that a flip $m(t') = (j \mapsto c)$ \emph{\bf supports} a flip $m(t) = (i \mapsto b)$ 
if 
\begin{equation}
    \gain(x^{t'}[j \mapsto c],i,b) > 0 \geq \gain(x^{t'}[j \mapsto \bar{c}],i,b)
    \label{eq:supportSign}
\end{equation}
\noindent and $t' < t$.
If $x^t_j = c$, then the support is said to be \emph{\bf strong}.
\end{definition}
It is useful to note that the inequality in Equation~\ref{eq:supportSign} implies that position $i$ sign-depends on position $j$ and also means that $C_{ij}(b,c) - C_{ij}(\bar{b},c) > C_{ij}(b,\bar{c}) - C_{ij}(\bar{b},\bar{c})$.
This inequality on $C_{ij}$ is symmetric in the sense that:
\begin{equation}
\begin{aligned}
\quad C_{ij}(b,c) - C_{ij}(\bar{b},c) > C_{ij}(b,\bar{c}) - C_{ij}(\bar{b},\bar{c}) \\ \Leftrightarrow
C_{ij}(b,c) - C_{ij}(b,\bar{c}) > C_{ij}(\bar{b},c) - C_{ij}(\bar{b},\bar{c})
\\ \Leftrightarrow
C_{ji}(c,b) - C_{ji}(\bar{c},b) > C_{ji}(c,\bar{b}) - C_{ji}(\bar{c},\bar{b})
\end{aligned} \label{eq:swapij}
\end{equation}
\begin{definition}
\label{def:encourage}
We say that a flip $m(t') = (j \mapsto c)$ \emph{\bf encourages} a flip $m(t) = (i \mapsto b)$,
and write $(t',j \mapsto c) \encourages (t,i \mapsto b)$,
if $m(t')$ is the most recent flip that strongly supports $m(t)$.
If there are no flips that encourage $m(t)$, then we say that 
$m(t)$ is \emph{\textbf{courageous}}, 
and write $\bot \encourages (t,i \mapsto b)$.
\label{def:cour}
\label{def:enc}
\end{definition}
For illustration, consider the sequence of moves listed in Equation~\ref{eq:expath} of Example~\ref{ex:quadpath}.
It corresponds to the following encouragement relation:
\begin{gather*}
    \bot \encourages (1,1 \mapsto 0), \\
    \bot \encourages (2,2\mapsto 1)  \encourages (3,1\mapsto 1), \\
    \bot \encourages (4,3 \mapsto 0)  \encourages (5,2 \mapsto 0)  \encourages (6, 1 \mapsto 0), \\
    \bot \encourages (7,4 \mapsto 1)  \encourages (8,3 \mapsto 1)  \encourages (9,2 \mapsto 1)  \encourages (10, 1 \mapsto 1) \\
\end{gather*}
Note that for any path $p = x^1 \ldots x^t \ldots x^T$ in a fitness graph,
the value of the associated fitness function $f$ must strictly increase at each step, so 
if $m(t) = (i \mapsto b)$ then $f(x^t[i \mapsto b]) > f(x^t[i \mapsto \bar{b}])$.
\begin{proposition}
If $(t_1,j \mapsto c) \encourages (t_2,i \mapsto b)$ (or if $\bot \encourages (t_2,i \mapsto b)$, set $t_1 = 0$)  
then for all $t$ with $t_1 < t \leq t_2$ we have $\gain(x^t,i,b) > 0$.
\label{prop:enc2fi}
\end{proposition}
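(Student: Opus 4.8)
The statement asserts that if a flip $m(t_2) = (i \mapsto b)$ is encouraged by $m(t_1) = (j \mapsto c)$ (with the convention $t_1 = 0$ for courageous flips), then the gain $\gain(x^t, i, b)$ stays strictly positive throughout the whole window $t_1 < t \le t_2$. I would prove this by a backwards induction on $t$, starting from $t = t_2$ and stepping down to $t_1 + 1$. The base case $t = t_2$ is immediate: since $p$ is a directed path, the move at time $t_2$ is improving, so $f(x^{t_2}[i \mapsto b]) > f(x^{t_2}[i \mapsto \bar{b}])$, which is exactly $\gain(x^{t_2}, i, b) > 0$ (as already observed in the paragraph preceding the proposition).

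\emph{Inductive step.} Suppose $t_1 < t < t_2$ and $\gain(x^{t+1}, i, b) > 0$; I want $\gain(x^t, i, b) > 0$. The assignments $x^t$ and $x^{t+1}$ differ only in the coordinate flipped at time $t$, say $m(t) = (\ell \mapsto d)$. If $\ell = i$ this cannot happen for $t < t_2$ without more care, but in fact it can: note that if $\ell = i$ then $\gain(x^t, i, b)$ and $\gain(x^{t+1}, i, b)$ refer to the same quantity because $\gain(x, i, b)$ does not depend on $x_i$ (it compares $x[i \mapsto b]$ with $x[i \mapsto \bar b]$), so the claim is trivially inherited. So assume $\ell \ne i$. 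Then $\gain(x^{t+1}, i, b) = \gain(x^t[\ell \mapsto d], i, b)$ and I must show $\gain(x^t[\ell \mapsto \bar d], i, b) > 0$ as well (note $x^t_\ell = \bar d$ when $\ell \neq i$, since flipping coordinate $\ell$ at time $t$ changes it to $d$). Here is where the definition of \emph{encouragement} does the work: if we had $\gain(x^t[\ell \mapsto d], i, b) > 0 \ge \gain(x^t[\ell \mapsto \bar d], i, b)$, then $m(t) = (\ell \mapsto d)$ would \emph{strongly} support $m(t_2) = (i \mapsto b)$ — strongly because $x^{t_2}_\ell = d$ (the coordinate $\ell$ was set to $d$ at time $t$ and, for this to be the \emph{most recent} strong supporter issue, I will need to track that it is not flipped again before $t_2$; see below). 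A strong supporter at time $t > t_1$ would contradict $m(t_1)$ being \emph{the most recent} flip that strongly supports $m(t_2)$. Hence $\gain(x^t[\ell \mapsto \bar d], i, b) > 0$, i.e. $\gain(x^t, i, b) > 0$.

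\emph{The delicate point.} The subtlety is the "strong" qualifier: $m(t)$ strongly supports $m(t_2)$ only if $x^{t_2}_\ell = d$, i.e. coordinate $\ell$ still holds value $d$ at time $t_2$. So the clean argument above needs the coordinate $\ell$ flipped at an intermediate time $t$ not to be touched again between $t$ and $t_2$. I would handle this by choosing $t$ more carefully in the induction: rather than walking down one step at a time and hitting every coordinate, I track, for the fixed target pair $(i,b)$, the value of $\gain(x^t, i, b)$ and argue that the first time (going backwards from $t_2$) it could fail to be positive, the responsible flip $m(t)$ is the \emph{last} flip of coordinate $\ell$ before $t_2$ — because between $t$ and $t_2$ the coordinate $\ell$ is untouched (otherwise we would have stepped past a later flip of $\ell$ first). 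That makes $x^{t_2}_\ell = d$, so the support is strong, and since $t > t_1$ this is a strong supporter more recent than $m(t_1)$ — contradicting the maximality in Definition~\ref{def:enc}. In the courageous case ($t_1 = 0$), any such strong supporter at some $t > 0$ contradicts courageousness directly. This handling of the "most recent strong supporter" bookkeeping is the main obstacle; the rest is the routine observation that $\gain(\cdot, i, b)$ is insensitive to coordinate $i$ and that intermediate flips of coordinates other than $i$ only change $\gain$ via a single binary constraint $C_{i\ell}$, which by Equation~\ref{eq:swapij} behaves monotonically in the required way.

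Putting these together, by backwards induction $\gain(x^t, i, b) > 0$ for every $t$ with $t_1 < t \le t_2$, which is the claim.
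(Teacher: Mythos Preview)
Your overall strategy matches the paper's: start from $\gain(x^{t_2},i,b)>0$, suppose the gain dips to $\le 0$ at some intermediate time, and then exhibit a flip in $(t_1,t_2)$ that strongly supports $m(t_2)$, contradicting the maximality of $t_1$ in Definition~\ref{def:enc}. The paper's proof is a two-line version of this; you spell it out as a backward induction and, commendably, isolate the one non-trivial step: the supporting flip must be \emph{strong}, i.e.\ the flipped coordinate $\ell$ must still hold the value $d$ at time $t_2$.

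Your resolution of that step, however, does not go through. You assert that at the last time $t$ (going backwards from $t_2$) where the gain fails to be positive, the responsible flip $m(t)=(\ell\mapsto d)$ must be the last flip of $\ell$ before $t_2$, ``because between $t$ and $t_2$ the coordinate $\ell$ is untouched (otherwise we would have stepped past a later flip of $\ell$ first)''. This is a non sequitur. A later flip of $\ell$ need not drive the gain to $\le 0$: if some other neighbour $p$ of $i$ has meanwhile raised the gain sufficiently, flipping $\ell$ back to $\bar d$ can leave the gain strictly positive, so your backward search never ``steps past'' it. In that scenario $x^{t_2}_\ell=\bar d$, the crossing flip at time $t$ is only a weak supporter, and there may be \emph{no} strong supporter in $(t_1,t_2)$ at all. (Concretely: crossing at $(\ell\mapsto d)$, then $(p\mapsto 1)$ lifts the gain further, then $(\ell\mapsto \bar d)$ lowers it but keeps it positive; this is realisable as an improving path in a binary Boolean instance whose constraint graph has a cycle through $i$, $\ell$, $p$.) The paper's terse proof glosses over exactly this point --- it simply asserts that the crossing flip ``strongly supports $m(t_2)$'' --- so you have correctly located where the argument is thin; but the bookkeeping fix you propose does not close the gap.
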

\begin{proof}
Since $m(t_2) = (i \mapsto b)$, we have that $\gain(x^{t_2},i,b) > 0$.

\noindent Now assume, for contradiction, that for some $t_0$ with 
$t_1 < t_0 < t_2$ we have $\gain(x^{t_0},i,b) \leq 0$.

In order to change the assignment $x^{t_0}$ so that $\gain(x^{t_2},i,b) > 0$, there
must be at least one flip $m(t')$ for some value of $t'$ with $t_0 < t' < t_2$
which increases the value of the gain function at $x^{t_2}$ above zero, and hence 
strongly supports $m(t_2)$.

Hence $m(t_1)$ is not the most recent flip that satisfies the conditions in 
Definition~\ref{def:encourage}, which contradicts the assertion that 
$(t_1,j \mapsto c) \encourages (t_2,i \mapsto b)$.
\end{proof}
By Definition~\ref{def:enc}, each flip can only be encouraged by at most one other flip, that occurs earlier in time,
so each node in the graph of the encouragement relation has out-degree at most one, and is acyclic.
Directed acyclic graphs where each vertex has at most one parent are forests, so the encouragement graph is a forest. 
This forest has a component for each courageous flip, and we will now show that
there are at most $n$ of these:
\begin{proposition}
\label{prop:CourStart}
At each variable position $i$, only the first flip can be courageous.
\end{proposition}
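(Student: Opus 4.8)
The plan is to argue by contradiction: suppose $m(t) = (i \mapsto b)$ is a courageous flip at position $i$, but it is \emph{not} the first flip at position $i$ along the path. Let $m(t'')$, with $t'' < t$, be the most recent flip at position $i$ strictly before time $t$. Since every flip at position $i$ toggles the value of $x_i$, and nothing between $t''$ and $t$ touches position $i$, we have $x^{t''+1}_i = x^t_i = \bar b$ (the value of $x_i$ just \emph{before} the flip at time $t$), and the flip at time $t''$ must have been $m(t'') = (i \mapsto \bar b)$ — i.e., it set position $i$ to the value opposite to what $m(t)$ sets it to. I would record these bookkeeping facts first.

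Next I would look for a flip that strongly supports $m(t)$ among times $t' \le t''$, which would contradict courageousness. The natural candidate is $m(t'')$ itself, but $m(t'')$ flips position $i$, not some other position $j$ — so instead I would trace back to whatever flip encouraged (or made courageous) $m(t'')$. Concretely: by Proposition~\ref{prop:enc2fi} applied to $m(t'')$, there is a time $t_1 < t''$ (with $t_1 = 0$ if $m(t'')$ is courageous) such that $\gain(x^r, i, \bar b) > 0$ for all $r$ with $t_1 < r \le t''$. In particular $\gain(x^{t''}, i, \bar b) > 0$, which is exactly $f(x^{t''}[i \mapsto \bar b]) > f(x^{t''}[i \mapsto b])$, i.e.\ $\gain(x^{t''}, i, b) < 0$. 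Meanwhile, since $m(t)$ is a valid path step, $\gain(x^t, i, b) > 0$. So the sign of $\gain(\cdot, i, b)$ has flipped from negative at time $t''$ to positive at time $t$, and the only coordinates that change between $x^{t''}$ and $x^t$ are positions other than $i$ (position $i$ is untouched in that window since $t''$ was the most recent $i$-flip). Therefore some flip $m(t^*)$ with $t'' \le t^* < t$, at a position $j \neq i$, must push $\gain(\cdot, i, b)$ from $\le 0$ to $> 0$; by Definition~\ref{def:support} this $m(t^*)$ strongly supports $m(t)$ (strongly, because position $j$ is then not re-flipped before time $t$ — one has to check this, but it follows since we can take $t^*$ to be the \emph{last} such sign-changing flip before $t$). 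This contradicts $\bot \encourages (t, i \mapsto b)$.

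The main obstacle, and the step I would be most careful about, is the clean extraction of the supporting flip and verifying it is \emph{strong}: I need to choose $t^*$ as the last time in $[t'', t)$ at which $\gain(\cdot, i, b)$ transitions from non-positive to positive, argue that the flipped position $j$ at that step is not $i$ (true, since $i$ is frozen in this window) and that $x^t_j$ still equals the value $m(t^*)$ assigned (true by maximality of $t^*$ — if $j$ were re-flipped, there would be a later sign transition, or the gain would have to dip and recover, which again gives a later last-transition), so that Equation~\ref{eq:supportSign} holds in its strong form with $t' = t^*$. Once that is pinned down, the contradiction with courageousness is immediate. A slightly slicker alternative I might use instead: apply Proposition~\ref{prop:enc2fi} to $m(t'')$ to get $\gain(x^{t''}, i, \bar b) > 0$, hence $\gain(x^{t''}, i, b) \le 0$; combined with $\gain(x^t, i, b) > 0$ and the fact that $x_i$ is constant on $[t''+1, t]$, the existence of \emph{some} strongly supporting flip for $m(t)$ in that window is forced by exactly the same reasoning used inside the proof of Proposition~\ref{prop:enc2fi} itself, so I could even cite that argument rather than re-running it.
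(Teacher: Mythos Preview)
Your argument is sound in outline but takes a much longer route than the paper, and the step you flag as delicate is indeed not justified by the reasoning you give.

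The paper's proof is a one-line application of Proposition~\ref{prop:enc2fi} to the courageous flip $m(t)$ \emph{itself}: since $\bot \encourages (t, i \mapsto b)$, that proposition (with $t_1 = 0$) gives $\gain(x^{t'}, i, b) > 0$ for every $t'$ with $1 \le t' \le t$. Hence position $i$ can never flip to $\bar b$ before time $t$, so $m(t)$ is the first flip at $i$. You never invoke Proposition~\ref{prop:enc2fi} on $m(t)$; instead you apply it to the \emph{previous} flip $m(t'')$, deduce that the gain is negative at $t''$ and positive at $t$, and then re-run by hand the sign-transition argument to manufacture a strongly supporting flip for $m(t)$. But that manufactured conclusion is exactly what Proposition~\ref{prop:enc2fi} already packages for courageous flips, so the detour through $m(t'')$ buys nothing.

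On the step you worry about: your claim that $x^t_j$ still equals the value set by $m(t^*)$ (``if $j$ were re-flipped, there would be a later sign transition'') does not follow. Between $t^*$ and $t$ other positions may have raised the gain, so $j$ can flip back to $\bar c$ while $\gain(\cdot, i, b)$ remains strictly positive throughout; maximality of $t^*$ as the last $\le 0 \to > 0$ transition does not rule this out. The desired conclusion (existence of \emph{some} strong supporter) is still true and is precisely what the proof of Proposition~\ref{prop:enc2fi} delivers, but not via the maximality argument you sketch.
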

\begin{proof}
Consider a courageous flip $\bot \encourages (t,i \mapsto b)$, by Proposition~\ref{prop:enc2fi}, we know that for all $t' < t$,
$\gain(x^t,i,b) > 0$.
Thus, there is no time $t' \leq t$ such that $i$ could have flipped to $\bar{b}$: 
hence $i$ was always at $\bar{b}$ for $t' \leq t$. 
So the courageous flip had to be the first flip at that position.
\end{proof}
We will now prove that an encouragement tree cannot double-back on itself in position (Proposition~\ref{prop:noReverse}), and that every branch is a branch in position (Proposition~\ref{prop:noTime}).
When the constraint graph is itself a tree, 
this will imply that each tree in the encouragement forest is a sub-tree of the constraint graph.
\begin{proposition}
\label{prop:noReverse}
If $(t_1,i \mapsto a) \encourages (t_2,j \mapsto b) \encourages (t_3,k \mapsto c)$, then $i \neq k$
\end{proposition}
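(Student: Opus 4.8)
The plan is to show that if $i \neq k$ fails, i.e. $i = k$, then we get a contradiction with the definition of encouragement. So suppose $(t_1, i \mapsto a) \encourages (t_2, j \mapsto b) \encourages (t_3, i \mapsto c)$. First I would use Proposition~\ref{prop:enc2fi} applied to the second encouragement link: for all $t$ with $t_2 < t \leq t_3$ we have $\gain(x^t, i, c) > 0$, which in particular means that position $i$ never flips \emph{away from} $c$ during the open interval $(t_2, t_3)$; in fact it tells us that whenever $i$ is flipped in that interval it is flipped \emph{to} $c$, and right before $t_3$ the variable $i$ must be at value $\bar c$ (so that $m(t_3) = (i \mapsto c)$ is a genuine flip).

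Next I would examine the flip $m(t_1) = (i \mapsto a)$. The key point is that $m(t_1)$ strongly supports $m(t_2) = (j \mapsto b)$, and since $i$ is the variable flipped at $t_1$, the relevant interaction is through $C_{ij}$; the strong support condition (Equation~\ref{eq:supportSign} together with the strongness requirement $x^{t_2}_i = a$) says something about how the gain of flipping $j$ to $b$ depends on the value of $i$. Then I would compare $a$ with $c$. There are two cases. If $a = c$: then at time $t_1$ position $i$ is set to $c$ and, combined with the fact established above that $i$ never leaves $c$ strictly between $t_2$ and $t_3$ and that $t_1 < t_2$, we need to track whether $i$ could have been flipped off $c$ between $t_1$ and $t_2$ — if not, then $i$ is still at $c$ just before $t_3$, contradicting that $m(t_3) = (i \mapsto c)$ is a real flip (it would be flipping $c \to c$). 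If it \emph{was} flipped off $c$ somewhere in $(t_1, t_2)$, I would argue that the most recent such departure-and-return pattern forces a more recent strong supporter of $m(t_3)$ than $m(t_2)$, contradicting the encouragement $(t_2, j\mapsto b) \encourages (t_3, i \mapsto c)$. If $a = \bar c$: then I would instead use the symmetry of the $C_{ij}$ inequality (Equation~\ref{eq:swapij}) to relate "$m(t_1)=(i\mapsto a)$ strongly supports flipping $j$" with the sign behaviour of "flipping $i$" as a function of $j$'s value, and play this off against the gain condition $\gain(x^t, i, c) > 0$ on $(t_2, t_3]$ to again locate a more recent strong supporter of $m(t_3)$, contradicting encouragement.

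The cleanest way to package all of this is probably: assume $i = k$, and consider the last time $t^*$ with $t_1 \le t^* < t_3$ at which position $i$ was flipped (this is $t^* \ge t_1$ since $m(t_1)$ flips $i$, and $t^* < t_3$). Since $m(t_3) = (i \mapsto c)$ is a flip, $x^{t_3}_i = \bar c$, so at $t^*$ position $i$ was flipped to $\bar c$, i.e. $m(t^*) = (i \mapsto \bar c)$; note $t^* \neq t_3$. By Proposition~\ref{prop:enc2fi} applied to $(t_2,j\mapsto b)\encourages(t_3,i\mapsto c)$ we have $\gain(x^t,i,c)>0$ for all $t_2 < t \le t_3$, and in particular $i$ cannot be flipped to $\bar c$ in that range, so $t^* \le t_2$. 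Combined with $t^* \ge t_1$ we get $t_1 \le t^* \le t_2$. Now the flip $m(t^*) = (i\mapsto \bar c)$ together with the eventual need to re-establish $\gain(x^{t_3},i,c)>0$ means some flip strictly after $t^*$ strongly supports $m(t_3)$; I then need to show this supporter occurs at a time $\ge t_2$ is impossible or, more carefully, that there is a strong supporter of $m(t_3)$ occurring strictly later than $t_2$, contradicting that $m(t_2)$ is the \emph{most recent} strong supporter of $m(t_3)$.

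The main obstacle will be handling the bookkeeping of \emph{strong} versus ordinary support carefully: the encouragement relation uses "most recent \emph{strong} supporter", and I must make sure that the supporter I extract after the departure $m(t^*)$ is genuinely strong (i.e. the supporting variable still has the supporting value at time $t_3$) and genuinely later than $t_2$. I expect this to require invoking the symmetry identity (Equation~\ref{eq:swapij}) to convert statements about "how $\gain(\cdot, i, c)$ depends on the value of the other variable" into statements about support, and a short argument that between a flip $i \mapsto \bar c$ and a flip $i \mapsto c$ the gain for $c$ must have been pushed back above zero by an intervening flip of some \emph{other} variable, which is the strong supporter we need. Once that is in place, the contradiction with the maximality ("most recent") clause in Definition~\ref{def:encourage} closes the argument, yielding $i \neq k$.
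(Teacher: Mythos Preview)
Your proposal has the right ingredients—Proposition~\ref{prop:enc2fi}, the strong-support fact $x^{t_2}_i = a$, and the symmetry identity~\eqref{eq:swapij}—but you are not deploying them efficiently, and your ``cleanest way to package'' alternative has a genuine gap that you yourself flag but do not close.

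For the case $a = c$: you correctly note that strong support gives $x^{t_2}_i = a$, but then you abandon this fact and retreat to time $t_1$, worrying about possible flips of $i$ in the interval $(t_1, t_2)$. This detour is unnecessary. Since $x^{t_2}_i = a = c$, and by Proposition~\ref{prop:enc2fi} we have $\gain(x^t, i, c) > 0$ for all $t_2 < t \le t_3$, position $i$ cannot flip to $\bar{c}$ anywhere in $(t_2, t_3]$; hence $x^{t_3}_i = c$ and $m(t_3) = (i \mapsto c)$ is not a legal flip. That is the paper's entire argument for this case—two lines, no departure-and-return bookkeeping, no search for a later supporter.

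For the case $a = \bar{c}$: your plan to ``play off'' the symmetry identity against the gain condition in order to locate a strong supporter of $m(t_3)$ strictly later than $t_2$ is where the real gap lies. The supporter you would extract after the last departure $m(t^*)$ could simply be $m(t_2)$ itself, which is perfectly consistent with the encouragement hypothesis; you acknowledge this obstacle and do not resolve it. The paper avoids dynamics entirely here: the support inequality coming from $(t_1, i \mapsto a) \encourages (t_2, j \mapsto b)$, after one application of~\eqref{eq:swapij}, reads
\[
C_{ij}(a,b) - C_{ij}(\bar a, b) > C_{ij}(a,\bar b) - C_{ij}(\bar a, \bar b),
\]
while the support inequality coming from $(t_2, j \mapsto b) \encourages (t_3, i \mapsto c)$ with $c = \bar a$ reads
\[
C_{ij}(\bar a, b) - C_{ij}(a, b) > C_{ij}(\bar a, \bar b) - C_{ij}(a, \bar b).
\]
These two strict inequalities are each other's negation—an immediate contradiction, with no gain functions and no supporters involved.

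In short: drop the ``more recent strong supporter'' framing. Use $x^{t_2}_i = a$ to dispose of the $a = c$ case in one step, and use the clashing $C_{ij}$ inequalities to dispose of $a = \bar c$ in one step.
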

\begin{proof}
Since $(t_1,i \mapsto a)$ encourages $(t_2,j \mapsto b)$, we have $x^{t_2}_i = a$. 
If, for the sake of contradiction, we assume that $i = k$ then $a = c$ (because if we had $c = \bar{a}$ then the two encouragements would force a contradiction via clashing Equations~\ref{eq:swapij})
and by Proposition~\ref{prop:enc2fi}: $\gain(x^t,i,a) > 0$ 
for all $t_2 < t \leq t_3$.
But this means that $i$ cannot be flipped to $\bar{a}$ in this interval,
and thus $m(t_3) = (i \mapsto a)$ is not a legal flip.
This is a contradiction and so $i \neq k$.
\end{proof}
\begin{proposition}
\label{prop:noTime}
For all $i,j$ and $t_1 < t_2 \leq t_3$, if $(t_1,i \mapsto a) \encourages (t_2, j \mapsto b)$ and\\$(t_1,i \mapsto a) \encourages (t_3, j \mapsto c)$, then $t_2 = t_3$.
\end{proposition}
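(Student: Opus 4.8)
The plan is to argue by contradiction, using the same style of reasoning as in Propositions~\ref{prop:enc2fi} and~\ref{prop:noReverse}: assume the two flips $m(t_2) = (j \mapsto b)$ and $m(t_3) = (j \mapsto c)$ are both encouraged by $m(t_1) = (i \mapsto a)$ with $t_1 < t_2 < t_3$ (the case $t_2 = t_3$ is the desired conclusion), and derive a contradiction from the fact that both are at position $j$. First I would split into the cases $b = c$ and $b = \bar{c}$. If $b = c$, then $m(t_2)$ and $m(t_3)$ are identical flips at position $j$ both going to value $b$; but between $t_2$ and $t_3$ the variable $j$ must be flipped back to $\bar b$ at some intermediate time in order for $m(t_3) = (j \mapsto b)$ to be legal. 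I would show that such an intermediate flip away from $b$ cannot happen while $m(t_1)$ still encourages $m(t_3)$: by Definition~\ref{def:encourage}, $m(t_1)$ being the \emph{most recent} strong support of $m(t_3)$ means that no flip strictly between $t_1$ and $t_3$ strongly supports $m(t_3)$; combined with Proposition~\ref{prop:enc2fi} applied to the encouragement $(t_1,i\mapsto a) \encourages (t_3, j \mapsto b)$ — which gives $\gain(x^t, j, b) > 0$ for all $t_1 < t \le t_3$ — the variable $j$ can never be flipped to $\bar b$ in the interval $(t_1, t_3]$, so in particular $m(t_2) = (j \mapsto b)$ with $t_1 < t_2 < t_3$ forces $x^{t_2}_j = \bar b$, but $\gain(x^{t_2}, j, b) > 0$ means $f(x^{t_2}[j \mapsto b]) > f(x^{t_2}[j \mapsto \bar b]) = f(x^{t_2})$; that is consistent with the flip $m(t_2)$, so I actually need the reverse direction: after $m(t_2)$ the variable $j$ sits at $b$, and Proposition~\ref{prop:enc2fi} keeps $\gain(x^t,j,b) > 0$ all the way to $t_3$, hence $j$ is never flipped off $b$ in $(t_2, t_3]$ and $m(t_3)$ is not a legal flip at all — contradiction.

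For the case $b = \bar c$: here the two encouragements $(t_1, i \mapsto a) \encourages (t_2, j \mapsto b)$ and $(t_1, i \mapsto a) \encourages (t_3, j \mapsto \bar b)$ say that the \emph{same} flip $m(t_1)$ strongly supports a flip of $j$ to $b$ and also strongly supports a flip of $j$ to $\bar b$. I would unwind the support condition (Equation~\ref{eq:supportSign}) in both cases and translate it, as the remark after Definition~\ref{def:support} does, into inequalities on $C_{ij}$: strong support of $(j \mapsto b)$ by $(i \mapsto a)$ gives $C_{ij}(a,b) - C_{ij}(\bar a, b) > C_{ij}(a,\bar b) - C_{ij}(\bar a, \bar b)$ (reading the roles through Equation~\ref{eq:swapij} with $i,j$ swapped appropriately), while strong support of $(j \mapsto \bar b)$ by $(i \mapsto a)$ gives the same inequality with $b$ replaced by $\bar b$, i.e. $C_{ij}(a,\bar b) - C_{ij}(\bar a, \bar b) > C_{ij}(a, b) - C_{ij}(\bar a, b)$. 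These two inequalities are direct negations of each other, so they cannot both hold — contradiction. This is the ``clashing Equations~\ref{eq:swapij}'' mechanism already invoked in the proof of Proposition~\ref{prop:noReverse}, so I would phrase it the same way.

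The main obstacle I anticipate is bookkeeping the value to which $j$ is flipped and making sure the intermediate-flip argument in the $b = c$ case is airtight: I must be careful that Proposition~\ref{prop:enc2fi} is applied to the encouragement ending at $t_3$ (giving positivity of $\gain(x^t, j, b)$ on the half-open interval $(t_1, t_3]$) and then correctly conclude that $j$ is pinned at $\bar b$ before time $t_2$ and pinned at $b$ from time $t_2{+}1$ onward — so that a \emph{second} legal flip of $j$ in this window is impossible, whether it goes to $b$ or away from $b$. Once the two cases ($b = c$ handled by the legality/pinning argument, $b = \bar c$ handled by clashing $C_{ij}$ inequalities) are disposed of, the only remaining possibility is $t_2 = t_3$, which is exactly the statement. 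No new machinery beyond Definitions~\ref{def:support}--\ref{def:encourage}, Proposition~\ref{prop:enc2fi}, and Equation~\ref{eq:swapij} should be needed.
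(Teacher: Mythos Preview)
Your proposal is correct, and the $b=c$ branch is exactly what the paper does. The difference is in how you dispose of $b=\bar c$. The paper avoids your case split entirely: it applies Proposition~\ref{prop:enc2fi} once, to the encouragement $(t_1,i\mapsto a)\encourages(t_3,j\mapsto c)$, to get $\gain(x^{t'},j,c)>0$ for every $t'\in(t_1,t_3]$. In particular $\gain(x^{t_2},j,c)>0$, and since $m(t_2)$ is an improving flip at position $j$, it must be the flip $j\mapsto c$; hence $b=c$ drops out immediately. After that, the pinning argument you give (no flip $j\mapsto\bar c$ can occur in $(t_1,t_3]$, so $m(t_3)$ is illegal unless $t_2=t_3$) is identical to the paper's.

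Your separate handling of $b=\bar c$ via the clashing $C_{ij}$ inequalities is sound---it is the same mechanism used in Proposition~\ref{prop:noReverse}---but it is heavier than necessary here. The gain-positivity from Proposition~\ref{prop:enc2fi} already rules out $b=\bar c$ without touching the constraint values, because $\gain(x,j,c)>0$ and $\gain(x,j,\bar c)>0$ cannot hold simultaneously. So your two-case argument works, but the paper's single observation is shorter and uses strictly less machinery.
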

\begin{proof}
From Proposition~\ref{prop:enc2fi}, we can see that 
for all $t' \in [t_1 + 1,t_3]$, $\gain(x^{t'},j,c) > 0$, so $b = c$ and 
$j$ couldn't have flipped from $c$ to $\bar{c}$ between $t_2$ and $t_3$.
Thus, for $(t_2, j \mapsto c)$ to be a legal flip, we must have $t_2 = t_3$.
\end{proof}
Now, if we trace back along the arrows 
of the encouragement relation, then each flip in the path $p$ 
is the start of a path of encouraged-by links that ends at 
one of the $n$ courageous flips.

One final case to exclude is that there might be two encouragement paths that go in the opposite direction over the same positions.
This cannot happen:
\begin{proposition}
\label{prop:noOpposite}
Having both of the following encouragement paths is impossible:
\begin{align*}
    \bot & \encourages (t_1,i_1 \mapsto b_1) \encourages (t_2,i_2 \mapsto b_2) &  \encourages \cdots  \encourages & \quad (t_m,i_m \mapsto b_m) \\
    \bot  & \encourages (s_m,i_m \mapsto c_m) \encourages (s_{m - 1},i_{m - 1} \mapsto c_{m - 1}) & \encourages \cdots  \encourages & \quad (s_1,i_1 \mapsto c_1)
\end{align*}
\end{proposition}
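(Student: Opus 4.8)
The plan is to combine two consequences of the hypothesised pair of paths --- one about the \emph{values} flipped to, one about the \emph{times} at which the flips occur --- into a contradiction. Write $m$ for the common length; we may assume $m\ge 2$ (for $m=1$ the two ``paths'' coincide in the single courageous flip at $i_1$, since by Proposition~\ref{prop:CourStart} only the first flip at a position can be courageous).

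\textbf{Step 1: the flipped values agree up to a global complement.} For each $k$ the link $(t_k,i_k\mapsto b_k)\encourages(t_{k+1},i_{k+1}\mapsto b_{k+1})$ means in particular that $m(t_k)$ supports $m(t_{k+1})$, so by Equation~\ref{eq:supportSign} and the remark following Definition~\ref{def:support} the mixed second difference of $C_{\{i_k,i_{k+1}\}}$ at the corner $(b_k,b_{k+1})$ is strictly positive; by the symmetry in Equation~\ref{eq:swapij} this statement does not depend on the orientation chosen for the edge. Likewise the link $(s_{k+1},i_{k+1}\mapsto c_{k+1})\encourages(s_k,i_k\mapsto c_k)$ forces that same mixed second difference to be strictly positive at $(c_k,c_{k+1})$. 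Since this quantity changes sign when exactly one coordinate is complemented, it is positive only at the two antipodal corners $(b_k,b_{k+1})$ and $(\bar b_k,\bar b_{k+1})$, so $c_k\oplus b_k=c_{k+1}\oplus b_{k+1}$; chaining over $k=1,\dots,m-1$ yields a single bit $\varepsilon$ with $c_k=b_k\oplus\varepsilon$ for all $k$.

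\textbf{Step 2: the times cannot be arranged.} From Proposition~\ref{prop:CourStart}, $m(t_1)$ is the first flip at $i_1$ and $m(s_m)$ is the first flip at $i_m$; from Proposition~\ref{prop:enc2fi}, $\gain(x^t,i_k,b_k)>0$ for all $t$ in the non-empty interval $(t_{k-1},t_k]$ (with $t_0=0$) and $\gain(x^t,i_k,c_k)>0$ for all $t$ in $(s_{k+1},s_k]$ (with $s_{m+1}=0$). Let $P(k)$ denote ``$t_k<s_{k+1}$''. I claim $P(1)$ holds, $P(k-1)$ implies $P(k)$ for $2\le k\le m-1$, and $s_m<t_{m-1}$; the first two chain to give $P(m-1)$, i.e.\ $t_{m-1}<s_m$, contradicting the third. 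When $\varepsilon=1$, $c_k=\bar b_k$, so path $B$ gives $\gain(x^t,i_k,b_k)<0$ on $(s_{k+1},s_k]$, which must be disjoint from the interval $(t_{k-1},t_k]$ on which this gain is positive; since flips at distinct positions occur at distinct time steps, disjointness of these two non-empty intervals forces $t_k<s_{k+1}$ or $s_k<t_{k-1}$ for each $k$, the boundary conventions ($t_0=s_{m+1}=0$) removing the second alternative at $k=1$ and the first at $k=m$, while for interior $k$ the hypothesis $P(k-1)$ excludes $s_k<t_{k-1}$ and hence forces $P(k)$. When $\varepsilon=0$, $m(t_k)$ and $m(s_k)$ both flip $i_k$ to $b_k$ and are distinct flips (uniqueness of the encouraging flip, together with the fact that a courageous flip is encouraged by none, and Proposition~\ref{prop:noReverse} to rule out $i_{k-1}=i_{k+1}$ for interior $k$), hence occur at distinct time steps; whichever comes first, $i_k$ must be flipped back to $\bar b_k$ in between, at some $\tau$ with $\gain(x^\tau,i_k,b_k)<0$, and locating $\tau$ relative to the positivity intervals of Proposition~\ref{prop:enc2fi} --- using ``first flip'' at $i_1$ and $i_m$ --- yields the same three statements (for the inductive step one first excludes $t_k>s_k$, then pins $\tau$ into $(t_k,s_{k+1}]$).

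\textbf{The main obstacle.} Step~1 is routine manipulation of the identities already present in Equations~\ref{eq:supportSign}--\ref{eq:swapij}; the work is in Step~2, and in particular in the fact that $\varepsilon=1$ and $\varepsilon=0$ genuinely require different arguments. For $\varepsilon=1$ a single gain value has the wrong sign, so a clean interval-disjointness bookkeeping suffices; for $\varepsilon=0$ the two paths agree on every value flipped to, and the obstruction appears only through the forced intermediate flips together with a careful account of which of $t_k$ and $s_k$ comes first. The fussiest point to get right is the coincidence $t_k=s_k$, excluded as indicated above.
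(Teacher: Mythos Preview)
Your argument is correct, but it takes a longer and structurally different route than the paper. The paper never analyses the relationship between the $b_k$ and the $c_k$ at all: it simply fixes the labelling so that $t_1 < s_1$ and proves the single inductive implication $t_k < s_k \Rightarrow t_{k+1} < s_{k+1}$. For the inductive step, from $(t_k,i_k\mapsto b_k)\encourages(t_{k+1},i_{k+1}\mapsto b_{k+1})$ the paper asserts that $x^t[i_k]=b_k$ throughout $[t_k+1,t_{k+1}]$, so $i_k$ cannot flip there and hence $s_k>t_{k+1}$; then from $(s_{k+1},i_{k+1}\mapsto c_{k+1})\encourages(s_k,i_k\mapsto c_k)$ it likewise asserts $i_{k+1}$ does not flip on $[s_{k+1}+1,s_k]$, forcing the flip $m(t_{k+1})$ to precede that interval, i.e.\ $s_{k+1}>t_{k+1}$. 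Iterating gives $t_m<s_m$, contradicting Proposition~\ref{prop:CourStart} for the courageous flip $m(s_m)$.

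Your Step~1 observation that $c_k=b_k\oplus\varepsilon$ for one global bit $\varepsilon$ is a nice structural fact, but the paper shows it is not needed for the contradiction. What you gain in exchange for the case split is that your Step~2 rests only on the gain-positivity intervals supplied by Proposition~\ref{prop:enc2fi} (and, in the $\varepsilon=0$ case, an explicitly located intermediate back-flip), rather than on the paper's stronger ``$i_k$ does not flip at all on $[t_k+1,t_{k+1}]$'' claim. The $\varepsilon=1$ case in particular becomes a clean interval-disjointness bookkeeping. So the two proofs differ mainly in granularity: the paper uses a coarser invariant ($t_k<s_k$) and a shorter argument, while yours carries a finer invariant ($t_k<s_{k+1}$) and is more explicit about exactly which inequality fails at each step.
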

\begin{proof}
Without loss of generality (by relabeling), we can assume that $t_1 < s_1$.
We can extend this with the following claim:
\\[0.2cm]
\noindent\textbf{Claim:} If $t_k < s_k$ then $t_{k + 1} < s_{k + 1}$

Since $(t_k,i_k \mapsto b_k) \encourages (t_{k + 1},i_{k + 1} \mapsto b_{k+1})$, 
we have, for all $t \in [t_k + 1, t_{k + 1}], \; x^{t}[i_k] = b_k$.
Thus we can't have $i_k$ flipping in that interval, so $s_k > t_{k + 1}$.

But now look at $(s_{k + 1},i_{k + 1} \mapsto c_{k + 1}) \encourages (s_k,i_k \mapsto c_k)$.
This shows that we also have, for all $t' \in [s_{k + 1} + 1, s_k], \; x^{t'}[i_{k + 1}] = c_{k + 1}$.
So for both flips at $i_{k + 1}$ to happen, we need $s_{k + 1} > t_{k + 1}$.
\\[0.2cm]
Applying the claim repeatedly gets us $t_m < s_m$.
But this means that $i_m$ flipped before $m(s_m)$, so by Proposition~\ref{prop:CourStart}, $(s_m,i_m \mapsto c_m)$ could not have been courageous.
\end{proof}
This means that it is sufficient to simply count the number of undirected paths in the encouragement trees.
We now pull all the results together to complete the proof.

\vspace{0.3cm}
\begin{proof}[of \textbf{Theorem~\ref{thm:main}}]
Consider any directed path $P$ in the fitness graph, and its corresponding flip function $m$. 
By the completeness of Definition~\ref{def:cour}, 
we know that every flip must have been either courageous or encouraged.

Any encouraged flip is the end-point of a unique (non-zero length) path 
in the graph of the encouragement relation, starting from some courageous flip,
and hence of some path in the constraint graph
(where Proposition~\ref{prop:noReverse} established that they're encouragement paths, not walks; and Proposition~\ref{prop:noTime} established that the encouragement paths are uniquely determined by the 
sequence of variable positions that they pass through.)
From Proposition~\ref{prop:noOpposite}, we know that there cannot be two encouragement paths that traverse the same positions but in opposite directions.
Thus, there can only be as many non-zero-length encouragement paths 
as undirected paths in our constraint graph.
Since our constraint graph is a tree, an undirected non-zero length path 
is uniquely determined by its pair of endpoints.
Thus, there are at most ${n \choose 2}$ of these paths.

From Proposition~\ref{prop:CourStart}, there are at most $n$ courageous flips (encouragement paths of length $0$).
Thus, our path $P$ must have length at most $n + {n \choose 2}$.
\end{proof}

\section{Long Paths in Landscapes with Simple Constraint Graphs}
\label{sec:robustness}

In this section we show that the conditions in Theorem~\ref{thm:main} are essential.
We exhibit binary VCSP instances with very simple constraint graphs 
where the associated fitness graphs have exponentially-long directed paths,
and hence the performance of local search algorithms on the associated fitness 
landscapes might be extremely poor.
\begin{example}{\bf (Domain size 3)}
\label{ex:counting}
Consider the family of binary VCSP instances on $n+1$ variables with domain $\{0,1,\rhd\}$
illustrated in Figure~\ref{fig:counting},
where the constraint graph of each instance is a path, and all constraints
are defined by matrices of the form:
\begin{equation*}
3 ^{i}\begin{pmatrix}
1 & 2 & 3 \\
2 & 3 & 1 \\
3 & 1 & 2
\end{pmatrix}
\end{equation*}
\begin{figure}[htb]
    \centering
    \includegraphics[width = 0.9\textwidth]{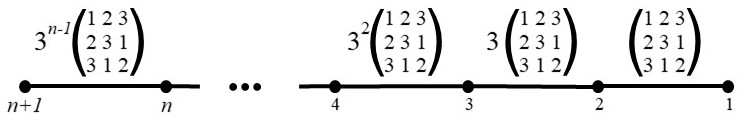}
    \vspace{-0.2cm}
    \caption{A family of binary VCSP instances over domain size 3 
    }
    \label{fig:counting}
\end{figure}

Even though the constraint graph of each instance is just a path of length~$n$, 
we now show that the corresponding fitness graph, contains a directed path whose length
grows exponentially in $n$.

Notice that given two natural numbers $M,M' < 2^n$, written in binary as $x^M$, $x^{M'} \in \{0,1\}^n$ (with the most significant bit on the left and the least significant bit on the right), 
we have that if $M' > M$ then $f(x^{M'}) > f(x^M)$.
Thus, counting up in binary from $0^{n+1}$ to $01^{n}$ is monotonically increasing in fitness.
However, $x^{M + 1}$ is often more than a single flip away from $x^M$ 
(consider the transition from $x^M = 01^{n}$ for an extreme example).
We handle these multi-flip cases with our third domain value, $\rhd$, as follows:
(1) given $x^M = y01^k$ for some $y \in \{0,1\}^{n - k}$, we proceed to replace 
the $1$s in the right-most block of $1$s by $\rhd$, starting from $x^M_{k}$ and moving to the right;
(2) from $y0\rhd^k$ we can take a 1-flip to $y1\rhd^k$ (regardless of whether $y$ ends with $0$ or $1$);
(3) from $x' = y1\rhd^k$, we replace the $\rhd$s by $0$s, starting from the rightmost $\rhd$ (i.e., $x'_1$) and moving to the left. 
Each of these individual 1-flips increases the overall fitness, since the 
fitness gain due to the constraint to the left is greater than the fitness
loss due to the constraint on the right (where there is one).

This lets our sequence of moves count in binary from $0^{n+1}$ to $01^n$,
while using extra steps with $\rhd$s to make sure all transitions are improving 1-flips; 
thus, this directed path in the fitness graph has a length greater than $2^n$. 
\end{example}
Note that although the fitness graphs corresponding to Example~\ref{ex:counting} have long 
improving paths, standard local search algorithms would be unlikely to follow these paths.
However, with careful padding, Example~\ref{ex:counting} has been converted to a 
family of Boolean VCSP instances of treewidth $7$ where even a popular local search algorithm like steepest ascent will follow an exponentially long improving path~\cite{CCKW2020}.

Our final example is a family of binary Boolean VCSP instances where the constraint graph of each instance has tree-width two and maximum degree three,
but the associated fitness graph contains an exponentially long directed path.
This example is a simplified and corrected version of a similar example for 
the {\sc Max-Cut} problem, described by Monien and Tscheuschner~\citeyear{Monien10:degree4}.
Note, however, that by allowing general valued constraints, instead of just 
{\sc Max-Cut} constraints, we are able to reduce the required maximum degree from four to three.  
\begin{example}{\bf (Tree-width 2)}
\label{ex:countBool}
Consider a family of binary Boolean VCSP instances with $n = 4K + 1$ variables.
For each instance the constraint graph contains a sequence of disjoint cycles of length four, 
linked together by a single additional edge joining each consecutive pair of cycles.
The $i$-th cycle (for $0 \leq i \leq K - 1$) has the constraints 
illustrated in Figure~\ref{fig:countBool}
(where the $w_i$ values are defined recursively with $w_0 = 0$).

\begin{figure}[htb]
\begin{center}
\includegraphics[width = 0.8\textwidth]{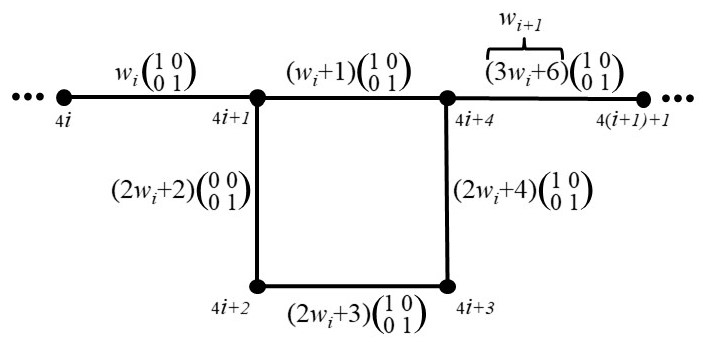}
\end{center}
\vspace{-0.5cm}
\caption{A single cycle from the family of binary Boolean VCSP instances defined in Example~\protect\ref{ex:countBool}}
\label{fig:countBool}
\end{figure}
The final cycle is replaced by a single variable $n$ with unary constraint \uC{1}{-w_{K}}.
Hence the constraint graph of any instance has maximum degree three and treewidth two
as illustrated in Figure~\ref{fig:countBoolPath}
\begin{figure}
    \centering
    \includegraphics[width = \textwidth]{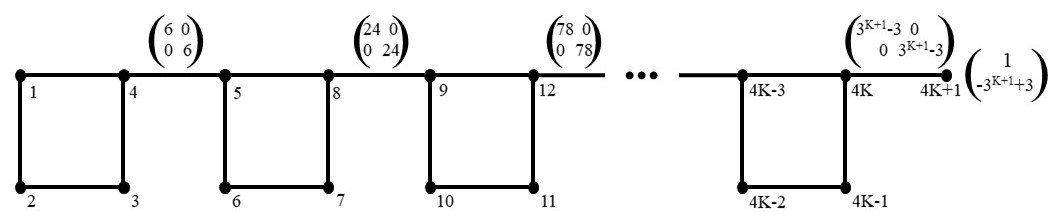}
    \caption{The family of binary Boolean VCSP instances defined in Example~\protect\ref{ex:countBool}}
    \label{fig:countBoolPath}
\end{figure}

To begin the long path all variables are assigned 0, except $x_n = 1$.
The path will proceed by always flipping variables in the smallest $4$-cycle block possible.

Within each $4$-cycle block, we number the variables anti-clockwise, from the top left,
and write the assignment to these $4$ variables ordered by decreasing index 
as $x_{4i + 4}x_{4i + 3}x_{4i + 2}x_{4i + 1}$.
We will make the following transitions within each cycle:
if $x_{4(i + 1) + 1} = 1$ then we'll transition $0000 \rightarrow 1000 \rightarrow 1001 \rightarrow 1101$;
if $x_{4(i + 1) + 1} = 0$ then we'll transition $1101 \rightarrow 0101 \rightarrow 0100 \rightarrow 0110 \rightarrow 0010 \rightarrow 0011 \rightarrow 0001 \rightarrow 0000$.
Every time that $x_{4i + 1}$ is flipped from $0$ to $1$ or vice versa, we'll recurse to the $(i - 1)$th cycle.
Because $x_{4i + 1}$ ends up flipping from $1$ to $0$ twice as often as $x_{4(i + 1) + 1}$, this means that we double the number of flips in each cycle.
Variable $n$ will flip once, from $1$ to $0$, due to the unary constraint, 
which will cause $x_{4(K - 1) + 1}$ to flip twice from $1$ to $0$, 
which will cause $x_{4(K - 2) + 1}$ to flip four times from $1$ to $0$,
and so on,
until eventually this will cause $x_1$ to flip $2^K$ times from $1$ to $0$.
Hence we have an improving path in the fitness graph of length greater than $2^K$.
\end{example}

\section{Applications to Models of Biological Evolution}
\label{sec:bio}

Focusing on the maximum length of improving paths in a fitness graph, rather than the run-time of a particular local search algorithm, lets us use our results in settings where the details of the local search algorithm are unknown or highly contingent.

The most notable example of this is in modelling biological evolution.
In such models, the value of the fitness function is interpreted as a measure of biological fitness (for example, expected number of offspring) and each variable assignment represents the values of the {\em alleles} at a sequence of {\em genetic loci}.
The constraint graph can then be interpreted as a {\em gene-interaction network}.
This gene-interaction network representation of biological fitness landscapes is similar to, but more general than, classic biological models such as 
the NK-model of fitness landscapes~\cite{KL87,KW89,Kaznatcheev2019,strimbu2019}.

The notion of sign-interaction that is central to Section~\ref{sec:signeq} is based on the biological idea of \emph{sign-epistasis} that is central to the analysis of evolutionary dynamics on fitness landscapes~\cite{PKWT07,PSKT11,CGB13,Kaznatcheev2019}.
Our uniqueness results for minimal representations of fitness landscapes 
and fitness graphs can be seen as a way to unambiguously answer which loci have sign-epistasis and to represent the structure of that sign-epistasis.

The process of biological evolution is often viewed as 
some form of randomised uphill climb in a given fitness landscape.
However, the exact probability of any particular adaptive mutation 
arising and fixing in a given population is often 
unknown (or even potentially unknowable in historic cases).
Hence it is very helpful to be able to reason over wide classes of local search algorithms, as we do here.
By showing that for some kinds of fitness landscapes all improving paths are short, 
we have shown that any such randomised algorithm will perform well on such landscapes, regardless of the exact probabilities of particular moves.
For example, we have shown that when the structure of the sign-epistasis is a tree,
then evolution will always reach a local fitness peak in a small number of moves, whatever sequence of improving moves is chosen.

This work has focused on bounding the worst-case complexity of local search algorithms.
Future work could identify further examples of fitness landscapes 
where local peaks can be found efficiently 
by considering less restrictive measures such as randomised complexity.
For randomised complexity, it would be possible to relax the condition that moves must be strictly improving ($f(x^t) < f(x^{t + 1})$) and consider the expected length of walks that include neutral ($f(x^t) = f(x^{t + 1})$) or even deleterious steps ($f(x^t) < f(x^{t + 1})$).
For worst-case complexity, considering neighbourhoods larger than 1-flip can only increase 
the length of the longest improving path. However, for randomised complexity
it would also be possible to consider the effects of larger neighbourhoods
(corresponding to larger mutations), or model the effects of 
recombination and sex.

In a different direction, it is known that there are classes of landscapes 
where locally optimal assignments cannot be found efficiently by any local search algorithm~\cite{Kaznatcheev2019}.
In such cases the structure of the fitness graph can be viewed as an ultimate constraint, that prevents evolution from stabilizing at a local fitness peak~\cite{Kaznatcheev2019}; such cases will give rise to \emph{open-ended evolution}. 
Identifying families of constraint graphs that lead to intractable local search problems therefore corresponds to finding forms of gene-interaction network 
that enable open-ended evolution.

Of course, by treating fitness as a scalar, fitness landscapes are themselves an idealization of the rich multi-faceted concept of biological fitness.
One direction for future work would be to find representations similar to VCSPs but for the richer model of \emph{game landscapes} that account for frequency-dependent fitness~\cite{KazEcoEvo}.

\section{Conclusion}
\label{sec:conc}

In this paper, we have considered the broad class of fitness landscapes that can be modelled by the combined effect of simple interactions of a few variables, where each of these interactions is described by an arbitrary valued constraint.
Modelling fitness landscapes in this way allows us to classify them in new ways:
for example by identifying a minimal constraint graph, and then characterising properties of this constraint graph.

This work raises a number of immediate further questions:
\begin{enumerate}
    \item How can we characterise the fitness landscapes that can be represented by VCSP instances of each fixed arity?
    Since the results on magnitude-equivalence (Theorems~\ref{thm:simplerep} and \ref{thm:valmin}) are based on representations of pseudo-boolean functions, they are easy to generalise to higher arity.
    But generalising the results on sign-equivalence (Theorems~\ref{thm:trimprep} and ~\ref{thm:signmin}) to higher arity will require new techniques.
    
    \item We have shown that when a family of fitness landscapes can be represented by binary Boolean VCSP instances where the constraint graph has maximum degree 2, then the minimal span of a sign-equivalent instance grows only polynomially (Example~\ref{ex:cycle}), and hence  
    finding a local optimum in the corresponding fitness landscapes by \emph{any} local search algorithm takes only polynomial time (Proposition~\ref{prop:spanPath}). 
    
    Can this result be generalised to wider classes of landscapes? This may be difficult:  
    we have shown that even for landscapes representable by binary Boolean VCSP instances  with tree-structured constraint graphs, the minimal span may grow exponentially with the number
    of variables (Example~\ref{ex:bigSpanTree}).
    
    \item We have shown that when a family of fitness landscapes can be represented by binary Boolean VCSP instances where the constraint graph is tree-structured, then finding a local optimum 
    in the corresponding fitness landscapes by any local search algorithm takes only polynomial time (Theorem~\ref{thm:main}). 
    
    Can this result be generalised to wider classes of landscapes? 
    This may be difficult:  
    we have shown examples over a slightly larger domain, or allowing slightly more general constraint graphs, where some local search algorithms can take exponential time to find a local optimum (Examples~\ref{ex:counting} and \ref{ex:countBool}).

    An alternative approach is to move away from considering just topological features of constraint graphs and look at features of the valued constraints themselves.
    Restricting the kinds of constraints (for example, considering only constraints that are convex, or sub-modular) could allow us to identify additional 
    classes of VCSP instances that are tractable for local search.
\end{enumerate}
We believe that the tools for classifying fitness landscapes that we have begun to develop here will allow considerable further progress, and may eventually help to shed more light on the question of why local search algorithms can be extremely effective in practice, for many kinds of optimisation problems.

\section*{Acknowledgments}\label{sec:Acknowledgments}
David A. Cohen was supported by Leverhulme Trust Grant RPG-2018-161.
Artem Kaznatcheev was supported by the Theory Division at the Department of Translational Hematology and Oncology Research, Cleveland Clinic.
An earlier version of some parts of this paper was presented
at the {\it 25th International Conference on Principles and Practice of Constraint Programming}~\cite{DBLP:conf/cp/KaznatcheevCJ19}.

\bibliographystyle{apacite}
\bibliography{MasterCSP}

\end{document}